\documentclass[11pt,a4paper]{article}
\usepackage{authblk}
\usepackage{amsmath,amssymb,amsfonts,bm,amsthm}

\DeclareFontFamily{U}{mathx}{}
\DeclareFontShape{U}{mathx}{m}{n}{ <-> mathx10 }{}
\DeclareSymbolFont{mathx}{U}{mathx}{m}{n}
\DeclareFontSubstitution{U}{mathx}{m}{n}

\DeclareMathAccent{\widecheck}{0}{mathx}{"71}
\usepackage{mathrsfs}
\usepackage{mathalfa}
\usepackage{graphicx}
\usepackage{color}
\usepackage{lscape}
\addtolength{\textheight}{1 cm}
\usepackage{extarrows}
\usepackage{enumitem}
\usepackage{empheq}
\usepackage{listings}
\usepackage{graphicx}
\usepackage[dvipsnames]{xcolor}
\lstset{
	language=Mathematica,
	basicstyle=\small\ttfamily,
	keywordstyle=\color{blue}\bfseries,
	commentstyle=\color{green!60!black}\itshape,
	stringstyle=\color{purple},
	numberstyle=\tiny\color{gray},
	identifierstyle=\color{black},
	breaklines=true,
	frame=single,
	showstringspaces=false,
	numbers=left,
	captionpos=b,
	escapeinside={(*@}{@*)},
	morekeywords={True, False, Module, Sum, Table, Log, Clear}
}
\usepackage{chngcntr}
\usepackage{hyperref}
\usepackage{appendix}
\usepackage{tikz}
\usepackage{comment}
\usepackage{braket}
\usepackage{bbm}
\usepackage{pifont}
\usepackage{physics}
\usepackage[english]{babel}
\usepackage[skins,breakable]{tcolorbox}

\usepackage{shuffle}
\usepackage{tocloft}
\interfootnotelinepenalty=10000
\usepackage{cite}

\textheight 23.5cm
\textwidth 16.3cm
\oddsidemargin 0pt
\evensidemargin 0pt
\topmargin -40pt
\jot = .5ex
\setlength{\parskip}{0pt}
\hypersetup{
	colorlinks=true, 
	linkcolor=violet,  
	urlcolor=blue,    
	anchorcolor=green, 
	citecolor=violet   
}
\newcommand{\T}{\mbox{\tiny T}}

\newcommand{\id}{\mathbbm{1}}

\usepackage[inline,marginal,final]{showlabels}

\theoremstyle{plain}
\newtheorem{theorem}{Theorem}
\newtheorem{result}{Result}
\newtheorem{proposition}{Proposition}
\newtheorem{corollary}{Corollary}
\newtheorem{lemma}{Lemma}

\theoremstyle{definition}
\newtheorem{definition}{Definition}

\theoremstyle{remark}
\newtheorem{remark}{Remark}
\newtheorem{example}{Example}

\newcommand{\RN}[1]{%
	\textup{\uppercase\expandafter{\romannumeral#1}}%
}

\title{Interplay between Standard Quantum Detailed Balance and Thermodynamically Consistent Entropy Production}
\author[1]{Xin-Hai Tong\thanks{\href{mailto:xinhai@iis.u-tokyo.ac.jp}{xinhai@iis.u-tokyo.ac.jp}}}
\affil[1]{Department of Physics, The University of Tokyo, 5-1-5 Kashiwanoha, Kashiwa-shi, Chiba 277-8574, Japan}
\author[2,3]{Kohei Yoshimura}
\affil[2]{Nonequilibrium Quantum Statistical Mechanics RIKEN Hakubi Research Team, Pioneering Research Institute (PRI), RIKEN, 2-1 Hirosawa, Wako, Saitama 351-0198, Japan}
\affil[3]{Universal Biology Institute, The University of Tokyo, 7-3-1 Hongo, Bunkyo-ku, Tokyo 113-0033, Japan}
\author[4]{Tan Van Vu}
\affil[4]{Center for Gravitational Physics and Quantum Information, Yukawa Institute for Theoretical Physics, Kyoto University, Kitashirakawa Oiwakecho, Sakyo-ku, Kyoto 606-8502, Japan}
\author[5]{Naruo Ohga}
\affil[5]{Department of Physics, Graduate School of Science, The University of Tokyo, 7-3-1 Hongo, Bunkyo-ku, Tokyo 113-0033, Japan}
\begin{document} 
	\maketitle

\begin{abstract}
We demonstrate that, for a quantum Markovian semigroup on a finite-dimensional Hilbert space, if it satisfies the standard quantum detailed balance condition, its generator admits a special representation that yields a vanishing entropy production rate. Conversely, if the generator admits a special representation adhering to the condition of thermodynamic consistency and leading to a vanishing entropy production rate, then the corresponding quantum Markovian semigroup must satisfy the standard quantum detailed balance condition. In this context, we adopt the definition of entropy production rate that is motivated by the physics literature and standard for thermodynamically consistent Lindbladians. 
\end{abstract}

\section{Introduction}
The rapid development of quantum platforms \cite{McArdle2020,Blais2021,Sieberer2025} has recently brought significant attention to the nonequilibrium statistical mechanics of open quantum systems \cite{Breuer2002}. To tackle the intrinsically complicated interplay between the system and environmental degrees of freedom, it is common in many scenarios to adopt the Markovian approximation for the dynamics of the system of interest \cite{Gorini1976,Lindblad1976,Seifert2012}. After sufficiently long relaxation times, the reduced density operator of the system typically approaches an invariant (i.e., time-independent) state. A fundamental task is then to determine whether this asymptotic state is a true thermal equilibrium state. If not, it is classified as a non-equilibrium steady state.

It is well known that in classical Markovian dynamics \cite{Seifert2012}, the detailed balance condition serves as a fundamental criterion for distinguishing equilibrium states from non-equilibrium steady states. When the detailed balance condition holds, the entropy production rate (EPR) vanishes, which is consistent with the physical interpretation of the EPR as a quantitative measure of the system's deviation from equilibrium. However, parallel results in the quantum regime require further detailed investigation. This work is devoted to rigorously establishing this exact interplay between the standard quantum detailed balance condition and the vanishing EPR in the quantum regime. \label{reply_intro}

To motivate our approach, we first briefly review the standard results in classical Markovian dynamics for finite state spaces. Let $\Omega$ be a finite set representing the state space. We denote by $\mathcal{A} = \mathbb{C}^{\Omega}$ the commutative algebra of complex-valued functions (observables) on $\Omega$. Let $p_{t}\colon \Omega\to \mathbb{R}_{\geq 0}$ be the probability distribution at time $t$, satisfying the normalization condition $\sum_{i\in \Omega}p_{t}(i)=1$. For distinct $i,j\in \Omega$, let $W(i,j)$ denote the transition rate from state $j$ to $i$. The master equation describing the time evolution of the probability distribution reads
\begin{equation}\label{eq:master_eq}
	\frac{\mathrm{d} }{\mathrm{d} t}p_{t}(i)=\sum_{j\in \Omega:j\neq i}\qty[W(i,j)p_{t}(j)-W(j,i)p_{t}(i)]=\sum_{j\in \Omega}L(i,j)p_{t}(j).
\end{equation}
Here, $L$ is the generator of the dynamics acting on probability distributions, with matrix elements defined by $L(i,j) = W(i,j)$ for $i\neq j$ and $L(i,i) = -\sum_{k\neq i}W(k,i)$. 
The invariant distribution (or steady-state) $\pi$ is a probability distribution satisfying $L\pi = 0$. We say the system satisfies classical detailed balance if
\begin{equation}\label{CDB}
	W(i,j)\pi(j)=W(j,i)\pi(i), \quad \forall i,j\in \Omega.
\end{equation}
For a fixed invariant distribution $\pi$, we define a sesquilinear form on the algebra $\mathcal{A}$ by $(f,g)_{\pi}\coloneq \sum_{i\in \Omega}\overline{f(i)}g(i)\pi(i)$. Strictly speaking, if the support of $\pi$ is not the entire set $\Omega$, this defines a pre-scalar product rather than an inner product. 
Let $L^{\ast}$ denote the adjoint of the generator $L$ with respect to the standard duality pairing $\langle f,g\rangle\coloneq \sum_{i\in \Omega}f(i)g(i)$. In quantum terminology, this adjoint corresponds to the generator in the Heisenberg picture acting on observables in $\mathcal{A}$. Evidently, one can reformulate \eqref{CDB} as 
\begin{equation}\label{classical_inn}
	(f,L^{\ast} g)_{\pi}=(L^{\ast} f,g)_{\pi}, \quad \forall f,g \in \mathcal{A},
\end{equation}
i.e., the classical detailed balance is rigorously characterized by the symmetry (self-adjointness) of the generator $L^{\ast}$ with respect to this pre-scalar product. 
For an invariant distribution $\pi$, the entropy production rate $\sigma$ is well-studied in the physics community and is given by \cite{VandenBroeck2015,Schnakenberg1976}:
\begin{equation}\label{classical_epr}
	\sigma=\frac{1}{2}\sum_{i,j\in \Omega}\qty[W(i,j)\pi(j)-W(j,i)\pi(i)]\ln \frac{W(i,j)\pi(j)}{W(j,i)\pi(i)},
\end{equation}
with the convention $0\ln (0/0)=0$. Comparing \eqref{classical_epr} and \eqref{CDB}, one notes that the EPR vanishes if and only if the classical detailed balance condition holds.

While the relationship between detailed balance and zero EPR is well-understood in the classical setting, the situation becomes significantly more intricate in the quantum context. The challenges in investigating this relationship are twofold.

(\emph{i}) Unlike the classical case, formulating a quantum counterpart of detailed balance directly in terms of the dynamical generator (analogous to \eqref{CDB}) is non-trivial. The standard mathematical approach involves introducing a sesquilinear form over the algebra of bounded operators on a Hilbert space, analogous to \eqref{classical_inn}, and defining detailed balance via the symmetry of the generator with respect to this form. Within this framework, there exist multiple non-equivalent definitions of quantum detailed balance, arising from different choices of the sesquilinear form and the specific relation imposed on the generator and its dual. Historically, this topic has been the subject of extensive discussion over the past five decades. See, e.g., Agarwal \cite{Agarwal1973}, Alicki \cite{Alicki1976}, Frigerio--Gorini--Kossakowski--Verri \cite{Kossakowski1978}, Majewski \cite{Majewski1983,Majewski1984}, Derezi\'nski--Fr\"uboes \cite{Derezinski2006}, and Fagnola--Umanit$\grave{\text{a}}$ \cite{Fagnola2010,Fagnola2015}. \label{reply_history_list}

(\emph{ii}) Second, the definition of EPR for quantum Markovian dynamics is also not unique. A celebrated definition due to Spohn \cite{Spohn1978} identifies EPR with the relative entropy decay. However, if the invariant state is not of the Gibbs form, this definition does not directly correspond to the sum of entropy changes in the system and environment, rendering its thermodynamic interpretation ambiguous. To obtain an EPR with a clear physical interpretation, specific setups have been considered. For example, Breuer \cite{Breuer2003} relates EPR to quantum jumps within the framework of stochastic processes. In \cite{VanVu2023,Horowitz2013,Manzano2018}, extra constraints are imposed on the jump operators to ensure a \textit{thermodynamically consistent} generator, leading to an EPR expression with a transparent physical meaning. Abstract definitions of EPR within the operator algebraic framework have also been discussed in \cite{Jaksic2001,Fagnola2015}.

In this work, we study the relationship between quantum detailed balance and zero EPR by bridging these mathematical and physical perspectives. Specifically, for the former, we adopt the notion of \textit{Standard Quantum Detailed Balance} (SQDB, see Definition \ref{defi_sqdb}). For the latter, we consider the quantum EPR (Definition \ref{def_epr}) defined for thermodynamically consistent generators. Both concepts are well-established and widely adopted in the mathematics and physics communities. Our main contributions, detailed in Section \ref{sec_main_theorem}, are summarized as follows:

\begin{result}[Informal version of Theorem \ref{theorem_sqdb_implies_epr}]
	\textbf{SQDB $\implies$ Zero EPR.} The SQDB condition ensures the existence of a special representation (Definition \ref{def_special_rep}) of the dynamical generator that yields a vanishing EPR.
\end{result}

\begin{result}[Informal version of Theorem \ref{theorem_epr_implies_sqdb}]
	\textbf{Zero EPR $\implies$ SQDB.} Conversely, any thermodynamically consistent dynamical generator with a vanishing EPR must satisfy the SQDB condition.
\end{result}

\noindent Motivated by these results, we further investigate the structural constraints on the space of representations imposed by conditions related to EPR and SQDB. These findings are presented in Section \ref{sec_structure}. We also note that recent related works have explored connections between other variants of detailed balance and entropy production \cite{Benoist2025,Fagnola2015}.

\section{Preliminaries and Notations}
This section is devoted to reviewing the necessary preliminaries, fixing the notation, and collecting standard results for the reader's convenience. Let $\mathcal{H}$ be a finite-dimensional Hilbert space with $\operatorname{dim}\mathcal{H}\eqcolon D$ and $\mathcal{B}(\mathcal{H})$ denote the algebra of all bounded linear operators on $\mathcal{H}$. \label{reply_finite}
We are interested in the quantum Markov semigroup (QMS), defined as a one-parameter family $\mathcal{T}=(\mathcal{T}_{t})_{t\geq 0}$ of normal, completely positive, and unital maps on $\mathcal{B}(\mathcal{H})$ that constitutes a norm-continuous semigroup. The predual semigroup of $\mathcal{T}$ is denoted by $\mathcal{T}_{\ast}=(\mathcal{T}_{\ast t})_{t\geq 0}$. We assume that $\mathcal{T}$ admits a faithful normal invariant state, represented by the density operator $\rho$. Denote by $\id$ the identity in $\mathcal{B}(\mathcal{H})$.
A central result in the theory of QMS is the GKSL representation of the generator, which is a special case of Theorem 30.16 in \cite{Parthasarathy2012}: \label{reply_theorem_rep}

\begin{theorem}\label{theorem_rep}
Let $\mathcal{L}$ be the generator of a norm-continuous QMS on $\mathcal{B}(\mathcal{H})$. There exists a self-adjoint operator $H$ and a finite sequence $(L_\ell)_{\ell \ge 1}$ of elements in $\mathcal{B}(\mathcal{H})$ such that:
\begin{itemize}
	\item[(i)] $\mathrm{Tr}( L_\ell) = 0$ for each $\ell \ge 1$,
	\item[(ii)] The operators in $\{L_{\ell}\}_{\ell\in \mathsf{k}}$ are linearly independent,
	\item[(iii)] the following formula holds for some $H=H ^{\ast}\in \mathcal{B}(\mathcal{H})$:
	\begin{equation}\label{lind_rep}
		\mathcal{L}(x) = i[H, x] - \frac{1}{2} \sum_{\ell \ge 1} (L_\ell^* L_\ell x - 2 L_\ell^* x L_\ell + x L_\ell^* L_\ell).
	\end{equation}
\end{itemize}
If $H'$, $(L'_{\ell})_{\ell \ge 1}$ is another family of bounded operators in $\mathcal{B}(\mathcal{H})$ with $H'$ self-adjoint and the sequence $(L'_\ell)_{\ell \ge 1}$ is finite then the conditions (i)--(iii) are fulfilled with $H$, $(L_\ell)_{\ell \ge 1}$ replaced by $H'$, $(L'_\ell)_{\ell \ge 1}$ respectively if and only if the lengths of the sequences $(L_\ell)_{\ell \ge 1}$, $(L'_{\ell})_{\ell \ge 1}$ are equal and for some scalar $c \in \mathbb{R}$ and a unitary matrix $(u_{\ell j})_{\ell j}$ we have
\begin{equation}\label{}
	\begin{aligned}[b]
		H' = H + c\id, \qquad L'_\ell = \sum_{j\geq 1} u_{\ell j} L_j.
	\end{aligned}
\end{equation}
\end{theorem}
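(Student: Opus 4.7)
My plan is to prove the theorem in three stages: reduce to the general Lindblad form, impose the normalization and minimality conditions (i)--(iii), and finally establish the uniqueness clause.

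First, since $\mathcal{T}$ is a norm-continuous completely positive semigroup, I would invoke Lindblad's structural theorem for generators of such semigroups to write $\mathcal{L}(x) = \Psi(x) + G^{\ast} x + x G$ for some normal completely positive map $\Psi$ on $\mathcal{B}(\mathcal{H})$ and some $G \in \mathcal{B}(\mathcal{H})$. Unitality $\mathcal{L}(\id) = 0$ forces $\Psi(\id) = -(G + G^{\ast})$, so decomposing $G = iK - \tfrac{1}{2}\Psi(\id)$ with $K = K^{\ast}$ brings $\mathcal{L}$ into the form $\mathcal{L}(x) = i[K,x] + \Psi(x) - \tfrac{1}{2}\{\Psi(\id), x\}$. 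A Kraus-type decomposition of the normal CP map $\Psi$ then produces $\Psi(x) = \sum_\ell M_\ell^{\ast} x M_\ell$ with $\sum_\ell M_\ell^{\ast} M_\ell = \Psi(\id)$ strongly convergent, giving condition (ii).

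Next, I would use the gauge freedom to enforce (i) and (iii). Setting $L_\ell := M_\ell - \mathrm{tr}(\rho_0 M_\ell)\,\id$ makes $\mathrm{tr}(\rho_0 L_\ell) = 0$ for every $\ell$; a direct expansion of the dissipator shows that this affine substitution preserves $\mathcal{L}$ at the cost of a purely self-adjoint shift that is absorbed into a redefinition of $K$, producing the desired $H = H^{\ast}$. To secure (iii) I would discard linearly dependent $L_\ell$: any relation $d_0\id + \sum_{\ell \geq 1} d_\ell L_\ell = 0$ with nonzero $(d_\ell)_{\ell \geq 1}$ lets one express some $L_m$ as a combination of the remaining $L_\ell$ and $\id$, and such a replacement can be undone inside the dissipator by a further self-adjoint shift of $H$ without altering $\mathcal{L}$; iterating trims the family to a minimal one.

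For uniqueness, suppose $(H, (L_\ell))$ and $(H', (L'_\ell))$ both satisfy (i)--(iv). Equating the two GKSL expressions and collecting the bilinear-in-jump-operators piece yields, for every $x \in \mathcal{B}(\mathcal{H})$, an identity of the form $\sum_\ell L'^{\ast}_\ell x L'_\ell - \sum_j L^{\ast}_j x L_j$ equals an expression linear in $\id$. I would then invoke the Kraus-style uniqueness argument for minimal representations: condition (iii) applied to both families forces the lengths to coincide and $L'_\ell = \sum_j u_{\ell j} L_j$ for some matrix $(u_{\ell j})$; unitality, combined with the normalization (i), then constrains $(u_{\ell j})$ to be unitary. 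Substituting back into the Hamiltonian part gives $i[H' - H, x] = 0$ for all $x$, so $H' - H$ is a scalar, and self-adjointness of both $H$ and $H'$ forces this scalar $c$ to be real. The main obstacle will be handling the uniqueness step cleanly in infinite dimensions, where the sequences may be infinite and the relevant sums must be controlled in the strong operator topology; the most delicate bookkeeping should arise in translating between the raw Kraus operators $M_\ell$ and the normalized, minimal operators $L_\ell$, and in verifying that condition (iii) really does force the intertwining matrix to be unitary rather than a more general isometry.
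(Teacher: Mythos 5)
First, note that the paper does not prove this theorem at all: it is quoted verbatim as a known result, with the proof deferred to Theorem 30.16 of Parthasarathy's book. So the comparison here is between your sketch and the standard textbook argument, and your outline does follow that standard route (Lindblad/Christensen--Evans structure theorem, Kraus decomposition of the CP part, gauge shift to enforce tracelessness, minimality, then uniqueness of minimal Kraus families).

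There are, however, two concrete gaps. The more serious one is your treatment of condition (iii). If $L_m = c_0\id + \sum_{\ell\neq m} c_\ell L_\ell$, you cannot simply drop $L_m$ and compensate ``by a further self-adjoint shift of $H$'': the discarded dissipator contributes cross terms $L_\ell^* x L_{\ell'}$ with $\ell\neq\ell'$, which are neither commutators with a self-adjoint operator nor absorbable into the surviving individual dissipators. One must instead recombine the \emph{entire} remaining family linearly (the Gram-matrix/isometry argument: pass to an orthonormal basis of the orthogonal complement of the kernel of $(d_\ell)\mapsto\sum_\ell d_\ell L_\ell$ in $\ell^2$), which also disposes of your ``iterate until minimal'' step --- iteration need not terminate when the family is infinite, whereas the orthocomplement construction handles all dependences at once and manifestly preserves both the CP part and the tracelessness from (i). The second gap is in the uniqueness clause: before you can ``collect the bilinear-in-jump-operators piece'' you must first show that the two completely positive parts actually coincide; a priori their difference is only of drift form $G^*x+xG$, and it is precisely the normalization (i) against $\rho_0$ that forces $G$ to be a scalar and hence $\Phi=\Phi'$. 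Likewise, unitarity (rather than mere isometry) of $(u_{\ell j})$ follows from applying minimality (iii) symmetrically to both families, not from unitality. These are standard repairs, but as written the proposal does not yet contain them.
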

\begin{remark}
It can be checked that the following shift transformation to jump operators and the Hamiltonian also links different representations for the identical generator:
\begin{equation}\label{rep_shift}
	\begin{aligned}[b]
		L_{\ell}'=L_{\ell}+a_{\ell}\id ,\qquad H'=H+\frac{1}{2i}\sum_{\ell\geq 1}\qty(\bar{a}_{\ell}L_{\ell}-a_{\ell}L^{\ast}_{\ell})+b\id,
	\end{aligned}
\end{equation}
for some complex sequence $(a_{\ell})_{\ell\geq 1}\subset \mathbb{C}$.
\end{remark}

In physics, the operators $H$ and $(L_{\ell})_{\ell\geq 1}$ are referred to as the \emph{Hamiltonian} and \emph{jump operators}, respectively.
\begin{definition}\label{def_special_rep}
	Equation \eqref{lind_rep} is said to constitute a \textbf{representation} of the Lindbladian $\mathcal{L}$ in the Heisenberg picture by means of $H$ and $(L_{\ell})_{\ell\geq 1}$ if condition (\emph{iii}) in Theorem \ref{theorem_rep} holds. A representation is called \textbf{special} if conditions (\emph{i}) and (\emph{ii}) are also satisfied.
\end{definition}
\noindent A \emph{special representation} is characterized as a representation consisting of traceless and linearly independent jump operators, a setup relevant to many physical applications.

Since the invariant state $\rho$ is assumed to be faithful, the sesquilinear form 
\begin{equation}\label{inner_quantum}
	(x,y)_{\rho} \coloneq \operatorname{Tr}(\rho^{1/2}x^{\ast} \rho^{1/2} y), \quad x,y \in \mathcal{B}(\mathcal{H})
\end{equation}
constitutes an inner product on $\mathcal{B}(\mathcal{H})$. Note that we reserve the standard notation $(\cdot ,\cdot)$ for inner product over the underlying Hilbert space $\mathcal{H}$. For each QMS $\mathcal{T}=(\mathcal{T}_t)_{t\ge 0}$, the inner product \eqref{inner_quantum} uniquely defines a semigroup $\mathcal{T}'=(\mathcal{T}'_t)_{t\ge 0}$ satisfying the relation
\begin{equation}\label{dual_map}
	(x,\mathcal{T}_{t}(y))_{\rho}=(\mathcal{T}'_{t}(x),y)_{\rho} ,\quad \forall t\in \mathbb{R}_{\geq 0}.
\end{equation}
We refer to $\mathcal{T}'$ as the \emph{dual semigroup} of $\mathcal{T}$ with respect to the invariant state $\rho$. The existence and uniqueness of $\mathcal{T}'$ are guaranteed by Proposition 1 and Theorem 1 in \cite{Fagnola2010}. Let $\mathcal{L}'$ be the generator of $\mathcal{T}'$, then from \eqref{dual_map} one deduces 
\begin{equation}\label{dual_generator}
	\begin{aligned}[b]
	(x,\mathcal{L}(y))_{\rho}=(\mathcal{L}'(x),y)_{\rho},
	\end{aligned}
\end{equation} \label{reply_dual_L}
which serves as the foundation for the following definition of quantum detailed balance, commonly referred to as the \emph{standard quantum detailed balance}  \cite{Fagnola2010,Fagnola2015,Derezinski2006}. As noted in \cite{Fagnola2015}, the original concept dates back to a private communication between Alicki and Majewski.

\begin{definition}\label{defi_sqdb}
	The QMS $\mathcal{T}$ generated by $\mathcal{L}$ satisfies the standard quantum detailed balance (SQDB) condition if there exists a self-adjoint operator $K$ on $\mathcal{H}$ such that $\mathcal{L}(x) - \mathcal{L}'(x) = 2i[K, x]$ for all $x \in \mathcal{B}(\mathcal{H})$, with $\mathcal{L}'$ being defined in \eqref{dual_generator}.
\end{definition}
\begin{remark}
	In some literature (e.g., \cite{Fagnola2010,Fagnola2015}), the duality is formally defined via the bilinear pairing $(x,y)_{0} \coloneqq \operatorname{Tr}(\rho^{1/2}x \rho^{1/2} y)$, rather than the sesquilinear inner product $(x,y)_{\rho}$ used in \eqref{inner_quantum}. Given a faithful state $\rho$, the duality relations $(x, \mathcal{L}(y))_{0}=(\mathcal{L}'(x),y)_{0}$ and $(x, \mathcal{L}(y))_{\rho}=(\mathcal{L}'(x),y)_{\rho}$
	are equivalent for any $\ast$-preserving operator $\mathcal{L}$ (i.e., $\mathcal{L}(x)^{\ast}=\mathcal{L}(x^{\ast})$). Since the generator of a QMS is inherently $\ast$-preserving, these two formulations determine the unique dual operator $\mathcal{L}'$ identically in this context.
\end{remark}
\begin{remark}
Even if $\mathcal{H}$ is infinite-dimensional, the boundedness of $K$ is implied by the self-adjointness of $K$ and the fact that the derivation $x \mapsto [K, x]$ is defined on the entire algebra $\mathcal{B}(\mathcal{H})$.
\end{remark}
The following result, due to Franco Fagnola and Veronica Umanit$\grave{\text{a}}$ \cite{Fagnola2010}, provides a characterization of SQDB:
\begin{theorem}\label{theorem_sqdb_chara}
	The QMS $\mathcal{T}$ satisfies the SQDB if and only if for all special representations of the generator $\mathcal{L}$ by means of operators $H, (L_k)_{k\geq 1}$ there exists a symmetric unitary matrix $(u_{k\ell})_{k,\ell\geq 1}$ such that, for all $k \ge 1$,
	\begin{equation}\label{eq1_theorem_sqdb_chara}
		\begin{aligned}[b]
			\rho^{1/2} L_k^{\ast} = \sum_{\ell} u_{k\ell} L_{\ell} \rho^{1/2}.
		\end{aligned}
	\end{equation}
\end{theorem}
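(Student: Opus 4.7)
The strategy is to compute the KMS-dual $\mathcal{L}'$ explicitly and apply the uniqueness part of Theorem \ref{theorem_rep} (equivalently, Kraus uniqueness for completely positive maps). Setting $\eta(x)\coloneq\rho^{1/2}x\rho^{1/2}$ and letting $\mathcal{L}_*$ denote the Hilbert--Schmidt predual of $\mathcal{L}$, cyclicity of the trace yields $(x,\mathcal{L}(y))_\rho=(\mathcal{L}'(x),y)_\rho$ with $\mathcal{L}'=\eta^{-1}\circ\mathcal{L}_*\circ\eta$. Expanding $\mathcal{L}_*$ using a special representation $(H,(L_\ell))$ of $\mathcal{L}$, the completely positive ``cross'' term of $\mathcal{L}'$ takes the form $x\mapsto\sum_\ell(\rho^{-1/2}L_\ell\rho^{1/2})x(\rho^{1/2}L_\ell^*\rho^{-1/2})$, while the cross term coming directly from the GKSL form of $\mathcal{L}$ is $x\mapsto\sum_\ell L_\ell^* x L_\ell$.

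\textbf{Forward direction.} Under SQDB one has $\mathcal{L}'=\mathcal{L}-2i[K,\cdot]$, so $\mathcal{L}$ and $\mathcal{L}'$ share the same dissipative part; the two Kraus families $(L_\ell^*)$ and $(\rho^{1/2}L_\ell^*\rho^{-1/2})$ therefore implement the same completely positive map and are both linearly independent (since the $L_\ell$ are and $\rho$ is faithful). Kraus uniqueness thus supplies a unitary $(u_{k\ell})$ with $L_k^*=\sum_\ell u_{k\ell}\,\rho^{-1/2}L_\ell\rho^{1/2}$; multiplying by $\rho^{1/2}$ from the left gives exactly \eqref{eq1_theorem_sqdb_chara}. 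For the symmetry of $u$, take the Hermitian adjoint of \eqref{eq1_theorem_sqdb_chara} to obtain $L_k\rho^{1/2}=\sum_\ell\bar u_{k\ell}\rho^{1/2}L_\ell^*$, then substitute \eqref{eq1_theorem_sqdb_chara} on the right-hand side to get $L_k\rho^{1/2}=\sum_m\bigl(\sum_\ell\bar u_{k\ell}u_{\ell m}\bigr)L_m\rho^{1/2}$. Invertibility of $\rho^{1/2}$ and linear independence of the $L_m$ force $\bar u\,u=\id$, which combined with unitarity $u^\dagger u=\id$ is equivalent to $u=u^T$.

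\textbf{Reverse direction and main obstacle.} Conversely, assuming \eqref{eq1_theorem_sqdb_chara} with a symmetric unitary $u$, I reverse the argument: \eqref{eq1_theorem_sqdb_chara} identifies the map $x\mapsto\sum_\ell L_\ell^* x L_\ell$ with $x\mapsto\sum_\ell(\rho^{-1/2}L_\ell\rho^{1/2})x(\rho^{1/2}L_\ell^*\rho^{-1/2})$, so the dissipative parts of $\mathcal{L}$ and $\eta^{-1}\mathcal{L}_*\eta=\mathcal{L}'$ coincide and their difference must be a commutator $2i[K,\cdot]$ with $K=K^*$ built explicitly from $H$ and from Hermitian combinations of $L_k^*L_\ell$ weighted by $u$. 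The principal technical difficulty is this Hamiltonian cleanup: the operator coefficients $\rho^{-1/2}L_\ell^*L_\ell\rho^{1/2}$ arising in $\eta^{-1}\mathcal{L}_*\eta$ are not self-adjoint, so one must split them into the self-adjoint sum $\sum_\ell L_\ell^*L_\ell$ plus a residual Hamiltonian piece, and then use \eqref{eq1_theorem_sqdb_chara} together with $u=u^T$ to verify that this residual piece is a genuine commutator with a well-defined $K=K^*$ that is independent of the choice of special representation.
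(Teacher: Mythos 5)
The paper itself does not prove this theorem; it is imported verbatim from Fagnola--Umanit\`a \cite{Fagnola2010}, so there is no in-paper argument to compare against. Your route --- computing the dual explicitly as $\mathcal{L}'=\eta^{-1}\circ\mathcal{L}_{*}\circ\eta$ with $\eta(x)=\rho^{1/2}x\rho^{1/2}$, reading off its completely positive part with Kraus operators $M_\ell=\rho^{1/2}L_\ell^{*}\rho^{-1/2}$, and invoking the uniqueness part of Theorem \ref{theorem_rep} --- is the standard one, and it is sound in the finite-dimensional setting in which the paper actually uses the result (in infinite dimensions $\rho^{-1/2}$ is unbounded, which is precisely why \cite{Fagnola2010} keeps the relation in the form \eqref{eq1_theorem_sqdb_chara}; your manipulations would need reworking there). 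Your forward direction is essentially complete: under SQDB, $(H'+2K,(M_\ell)_\ell)$ is a second special representation of $\mathcal{L}$, the uniqueness statement of Theorem \ref{theorem_rep} supplies the unitary $u$, and your derivation of $\bar{u}u=\id$, hence $u=u^{\T}$, from invertibility of $\rho^{1/2}$ and linear independence of the $L_\ell$ is correct. Your intermediate claim that $\mathcal{L}$ and $\mathcal{L}'$ then share the same dissipative (and completely positive) part is also correct and is exactly what that uniqueness argument delivers.

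The reverse direction, however, is a plan rather than a proof, and the ``principal technical difficulty'' you flag dissolves if you organize the computation around $G$ rather than around the individual coefficients $\rho^{-1/2}L_\ell^{*}L_\ell\rho^{1/2}$. Write $\mathcal{L}(x)=G^{*}x+\Phi(x)+xG$ with $G=-\tfrac12\sum_\ell L_\ell^{*}L_\ell-iH$; the explicit dual is then $\mathcal{L}'(x)=G'^{*}x+\Phi'(x)+xG'$ with $G'=\rho^{1/2}G^{*}\rho^{-1/2}$ and $\Phi'(x)=\sum_\ell M_\ell^{*}xM_\ell$. Relation \eqref{eq1_theorem_sqdb_chara} says $M_k=\sum_\ell u_{k\ell}L_\ell$, so unitarity of $u$ alone already gives $\Phi'=\Phi$. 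Consequently $\mathcal{L}(x)-\mathcal{L}'(x)=(G^{*}-G'^{*})x+x(G-G')$, and since the invariance of $\rho$ gives $\mathcal{L}'(\id)=0$ and hence $G'+G'^{*}=-\Phi'(\id)=-\Phi(\id)=G+G^{*}$, the operator $G-G'$ is automatically skew-adjoint. Therefore $K=i(G-G')/2$ is bounded, self-adjoint, and satisfies $\mathcal{L}-\mathcal{L}'=2i[K,\cdot]$, which is SQDB. No splitting of non-self-adjoint coefficients, no explicit Hermitian combinations of $L_k^{*}L_\ell$, and no use of $u=u^{\T}$ are required in this direction (indeed, as your own forward computation shows, the symmetry of $u$ is a consequence of \eqref{eq1_theorem_sqdb_chara} plus unitarity, not an independent input). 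With this completion your argument is a correct and self-contained proof in the finite-dimensional case.
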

\begin{remark}
	The condition ``for all special representation'' is equivalent to ``for a single special representation'', since all special representations are related by a unitary transformation.
\end{remark}

We now introduce the notation for the quantum EPR. Let $\mathsf{k}$ be a finite set representing the indices of all jump operators in some representation of a QMS generator. In the context of quantum thermodynamics, it is essential to consider representations where the jump operators appear in pairs. We encode this structure via an involution on the index set, denoted by $\ast: \mathsf{k}\rightarrow \mathsf{k}$.
We distinguish between indices based on their behavior under this involution: for $\ell\in \mathsf{k}$, if $\ell^{\ast}=\ell$, we term the index a \textit{singlet} and if $\ell^{\ast}\neq \ell$, the pair $[\ell, \ell^{\ast}]$ is termed a \textit{doublet}. It is important to emphasize that the involution  is an internal structure of the index set $\mathsf{k}$ and is, by construction, representation independent.
With these preparations, we define the quantum EPR in the following. \label{reply_def_epr}
\begin{definition}\label{def_epr}
Let $\rho$ be an invariant state over a finite-dimensional Hilbert space $\mathcal{H}$, and let us fix a spectral decomposition $\sum_{j}\rho_{j}\ket{e_{j}}\bra{e_{j}}$ of it. In the presence of degenerate eigenvalues, we arbitrarily choose and fix one such orthonormal eigenbasis $\{e_j\}_{j}$. Consider a representation of the generator of a QMS over $\mathcal{B}(\mathcal{H})$ with finitely many jump operators (i.e., $|\mathsf{k}|<\infty$), characterized by $H$ and $(L_{\ell})_{\ell\in \mathsf{k}}$. Denote $w_{ij}^{k}\coloneq |(e_{i},L_{k}e_{j})|^{2}$ and the quantum EPR is defined as follows
\begin{align}\label{quantum_epr}
	\sigma=\frac{1}{2}\sum_{k\in \mathsf{k}}\sum_{i,j}\qty(\rho_{i}w^{k}_{ji}-\rho_{j}w^{k^{\ast}}_{ij})\ln\frac{\rho_{i}w^{k}_{ji}}{\rho_{j}w^{k^{\ast}}_{ij}}
\end{align}
with the convention $0\ln (0/0)=0$.
\end{definition} 
\begin{remark}
	Let $\widetilde{\mathsf{k}}$ denote the collection of orbits in $\mathsf{k}$ under the involution $k \mapsto k^*$ \footnote{Strictly speaking, $\widetilde{\mathsf{k}}$ represents the quotient space $\mathsf{k}/\mathbb{Z}_2$ under the group action generated by the involution $k \mapsto k^{\ast}$.}, categorized as doublets (orbits of size 2) or singlets (orbits of size 1). One may reformulate the summation in \eqref{quantum_epr} as
	\begin{equation}\label{epr_reduced}
		\sigma = \sum_{k\in \widetilde{\mathsf{k}}}\theta_{k}\sum_{i,j}\left(\rho_{i}w^{k}_{ji}-\rho_{j}w^{k^{\ast}}_{ij}\right)\ln\frac{\rho_{i}w^{k}_{ji}}{\rho_{j}w^{k^{\ast}}_{ij}},
	\end{equation}
	where the index $k\in \widetilde{\mathsf{k}}$ in the outer summation represents an arbitrary element selected from the corresponding doublet or singlet. The weight factor $\theta_{k}$ is defined as $1$ for doublets and $1/2$ for singlets.
\end{remark}
\noindent It has been repeatedly discussed in the physics literature \cite{VanVu2023,Horowitz2013,Manzano2018} that the quantum EPR defined in \eqref{quantum_epr} admits a transparent thermodynamic interpretation, provided that the jump operators satisfy a so-called local detailed balance relation. Specifically, there exist positive constants $c_{k}$ for each $k\in \mathsf{k}$ such that\begin{equation}\label{tc}L_{k^{\ast}} = c_{k}L_{k}^{\ast}.\end{equation}Note that the involution property $k^{**}=k$ imposes the necessary consistency constraint $c_{k}c_{k^{\ast}}=1$. In this context, the quantity $-2\ln c_{k}\eqcolon s_{k}$ is identified as the entropy change in the environment associated with the jump operator $L_{k}$. We emphasize that the definition of the EPR in \eqref{quantum_epr} is fundamentally quantum-motivated and basis-independent. Indeed, under the condition \eqref{tc}, one can verify that the EPR reduces to $\sigma=\sum_{k\in \mathsf{k}}s_{k}\operatorname{Tr}(L^{\ast}_{k}L_{k}\rho)$, which exactly captures the entropy production in the environment. Furthermore, the entropy change of the system is given by the time derivative of its von Neumann entropy, which strictly vanishes for an invariant state. The definition \eqref{quantum_epr} can also be generalized to any transient state $\rho(t)$ by simply replacing the constant eigenvalues $\{\rho_{j}\}$ with the time-dependent ones $\{\rho_{j}(t)\}$. In this general setting, under the condition \eqref{tc}, the quantum EPR precisely identifies the total entropy production: the sum of the entropy changes in both the system and the environment. For a more detailed and related discussion, we refer the readers to Appendix E of \cite{VanVu2023}. \label{reply_epr_page}

\begin{definition}\label{def_tc}
	A representation $H,(L_{k})_{k\in \mathsf{k}}$ of a QMS generator is said to be \textbf{thermodynamically consistent} (TC) with respect to the family $(c_{k})_{k\in \mathsf{k}}$, if the finite index set $\mathsf{k}$ is equipped with an involution $k \mapsto k^*$ and the relation \eqref{tc} holds for all $k\in \mathsf{k}$.
\end{definition}

\section{Interplay between SQDB and Zero EPR}\label{sec_main_theorem}
In this section, we investigate the correspondence between the SQDB condition (Definition \ref{defi_sqdb}) and the quantum EPR introduced in the preceding section. Consistent with the setup in Definition \ref{def_epr}, we continue to restrict our attention to finite-dimensional Hilbert spaces and generators represented by finitely many jump operators. Our main results are summarized in Theorems \ref{theorem_sqdb_implies_epr} and \ref{theorem_epr_implies_sqdb} below. Their proofs are established by invoking Theorem \ref{theorem_sqdb_chara} together with several auxiliary lemmas and propositions. \label{reply_main_theorem_3}

\begin{theorem}\label{theorem_sqdb_implies_epr}
Let $\mathcal{T}$ be a QMS satisfying SQDB  and  $\mathsf{k}$ be a finite set equipped with an involution. If the generator $\mathcal{L}$ admits a special representation indexed by $\mathsf{k}$, then there exists a special representation indexed by the same involutive index set $\mathsf{k}$, such that the quantum EPR defined in \eqref{quantum_epr} vanishes.
\end{theorem}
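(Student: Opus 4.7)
The plan is to combine the Fagnola--Umanit\`a characterization of SQDB (Theorem \ref{theorem_sqdb_chara}) with the Autonne--Takagi decomposition of symmetric unitary matrices. Starting from the given special representation $H,(L_k)_{k\in\mathsf{k}}$, Theorem \ref{theorem_sqdb_chara} furnishes a symmetric unitary matrix $(u_{k\ell})_{k,\ell\in\mathsf{k}}$ satisfying $\rho^{1/2}L_k^{\ast}=\sum_{\ell}u_{k\ell}L_{\ell}\rho^{1/2}$. The key observation is that if this matrix $u$ equals the permutation matrix $P$ of the involution, $P_{k\ell}=\delta_{\ell,k^{\ast}}$, then the SQDB relation collapses to the pointwise identity $\rho^{1/2}L_k^{\ast}=L_{k^{\ast}}\rho^{1/2}$; taking matrix elements in the eigenbasis $\{e_j\}$ of $\rho$ yields $\rho_{i}^{1/2}\,\overline{(L_{k})_{ji}}=\rho_{j}^{1/2}\,(L_{k^{\ast}})_{ij}$, whose modulus squared is exactly the summand-wise vanishing condition $\rho_{i}w^{k}_{ji}=\rho_{j}w^{k^{\ast}}_{ij}$ in \eqref{quantum_epr}. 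Note that $P$ itself is symmetric unitary because the involution is idempotent.

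It therefore suffices to construct a unitary change-of-basis on the jump operators that sends $u$ to $P$. Under a transformation $L'_{k}=\sum_{\ell}A_{k\ell}L_{\ell}$ with $A$ unitary, a short direct computation shows that the SQDB matrix transforms as $u'=\overline{A}\,u\,A^{\ast}$. Here I would invoke the Autonne--Takagi decomposition: every symmetric unitary matrix admits a factorization $XX^{T}$ for some unitary $X$ (its singular values being all equal to one). Writing $u=VV^{T}$ and $P=WW^{T}$, the choice $A=\overline{W}\,V^{T}$ is unitary as the product of two unitaries, and one computes
\begin{equation*}
	\overline{A}\,u\,A^{\ast} \;=\; WV^{\ast}\cdot VV^{T}\cdot\overline{V}W^{T} \;=\; W(V^{\ast}V)(V^{T}\overline{V})W^{T} \;=\; WW^{T} \;=\; P,
\end{equation*}
using $V^{T}\overline{V}=(V^{\ast}V)^{T}=\id$. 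By Theorem \ref{theorem_rep}, a unitary change-of-basis preserves the conditions (i)--(iii) defining specialness, so $(L'_{k})_{k\in\mathsf{k}}$ together with the original Hamiltonian constitutes a special representation of $\mathcal{L}$ indexed by $\mathsf{k}$ with the same involution, and its SQDB relation reads $\rho^{1/2}L'^{\ast}_{k}=L'_{k^{\ast}}\rho^{1/2}$. Combined with the first paragraph, this gives $\sigma=0$ (the convention $0\ln(0/0)=0$ takes care of any degenerate summand).

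The main obstacle I anticipate is the middle step: the existence of a unitary $A$ realizing $\overline{A}\,u\,A^{\ast}=P$ is not automatic from the unitarity of $u$ and $P$ alone. It rests on the structural fact, delivered by Autonne--Takagi, that all symmetric unitary matrices of a given size lie in a single orbit under the transformation $u\mapsto\overline{A}\,u\,A^{\ast}$; outside the class of symmetric unitaries, the analogous reduction of $u$ to a prescribed target would generally fail, so it is exactly the interplay between the symmetry of $u$ (built into SQDB) and the symmetry of $P$ (built into the involution) that makes the argument work.
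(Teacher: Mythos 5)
Your argument is correct and essentially coincides with the paper's proof: both reduce the problem, via Theorem~\ref{theorem_sqdb_chara}, to finding a unitary mixing of the jump operators that carries the symmetric unitary matrix $u$ onto the involution matrix $\Delta_{k\ell}=\delta_{k^{\ast}\ell}$, and both obtain that unitary from the Autonne--Takagi factorization of symmetric unitary matrices (the paper passes through the canonical block form $\mathcal{J}$ of Lemma~\ref{lemma_autonue_takagi} and a permutation, while you Takagi-factor $u=VV^{\T}$ and $P=WW^{\T}$ directly and compose the factors --- an interchangeable packaging of the same step). The only blemish is terminological: $P$ is symmetric and unitary because $k\mapsto k^{\ast}$ is an involution, i.e.\ $k^{\ast\ast}=k$, not because it is ``idempotent''.
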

\begin{remark}
The trivial case where there are only singlets in $\mathsf{k}$ is also included in the involution structure.
\end{remark}
\begin{proof}
We enter the special representation $H,(L_{\ell})_{\ell\in \mathsf{k}}$ described in the assumption and by Theorem \ref{theorem_sqdb_chara} there exists a symmetric unitary matrix $(u_{k \ell})_{k,\ell\in \mathsf{k}}$ such that
\begin{equation}\label{result_the_cha}
	\begin{aligned}[b]
		\rho^{1/2} L_k^{\ast} = \sum_{\ell} u_{k\ell} L_{\ell} \rho^{1/2}.
	\end{aligned}
\end{equation}
From Theorem \ref{theorem_rep}, a unitary matrix $(v_{ij})_{k,\ell\in \mathsf{k}}$ links this representation $(L_{\ell})_{\ell \in \mathsf{k}}$ to another one, denoted by $(J_{\ell})_{\ell\in \mathsf{k}}$ via
\begin{equation}\label{find_ano_rep}
	\begin{aligned}[b]
		L_{k}=\sum_{a}v_{ka}J_{a} \quad\text{and thus}\quad L_{k}^{\ast}=\sum_{a}\bar{v}_{ka}J_{a}^{\ast}.
	\end{aligned}
\end{equation}
Putting \eqref{find_ano_rep} into \eqref{result_the_cha}, we obtain
\begin{equation}\label{find_ano_rep_2}
	\begin{aligned}[b]
		\sum_{a}\bar{v}_{ka}\rho^{1/2}J_{a}^{\ast}=\sum_{a}(uv)_{ka}J_{a}\rho^{1/2} \quad\text{hence}\quad \rho^{1/2}J_{k}^{\ast}=\sum_{a}(\bar{v}^{-1}uv)_{ka}J_{a}\rho^{1/2}.
	\end{aligned}
\end{equation}
Provided that we require
\begin{equation}\label{required_v}
	\begin{aligned}[b]
		(\bar{v}^{-1}uv)_{ka}=\delta_{k^{\ast} a}
	\end{aligned}
\end{equation}
with $\delta$ denoting the Kronecker delta, we have $\rho^{1/2}J_{k}^{\ast}=J_{k^{\ast}}\rho^{1/2}$. This relation implies $(e_{i},\rho^{1/2}J_{k}^{\ast}e_{j})=(e_{i},J_{k^{\ast}}\rho^{1/2}e_{j})$ for any eigenvectors of $\rho$. Therefore, one deduces $\rho_{i}^{1/2}(e_{i},J_{k}^{\ast}e_{j})=(e_{i},J_{k^{\ast}}e_{j})\rho_{j}^{1/2}$ and this relation
implies $		w_{ji}^{k}\rho_{i}=w^{k^{\ast}}_{ij}\rho_{j}$, which clearly leads to a vanishing EPR [cf.\,(\ref{quantum_epr})].

It therefore suffices to show that there exists a unitary matrix $v$ satisfying \eqref{required_v}. First we consider a specific ordering of the index set $\mathsf{k}$ such that doublets precede singlets. Formally, the sequence of jump operators is arranged so that $(2i-1)^{\ast}=2i$ for $1 \le i \le n$ (where $n$ is the number of doublets) and $i^{\ast}=i$ for $i > 2n$. Under this ordering, the involution matrix $\Delta = (\delta_{k^{\ast}a})_{k,a\in \mathsf{k}}$ takes the block-diagonal form:
\begin{equation}\label{J_matrix}
	\mathcal{J} \coloneq \left(\bigoplus_{i=1}^{n} \sigma_{x}\right) \oplus \left(\bigoplus_{j=1}^{m} (1)\right),
	\quad \text{with } \sigma_{x} \coloneq \begin{pmatrix} 0 & 1 \\ 1 & 0 \end{pmatrix},
\end{equation}
where $m$ denotes the number of singlets. Then generally, for an arbitrary ordering there always exists a permutation matrix $P$ such that the involution matrix is given by $\Delta = P \mathcal{J} P^{\T}$. 
Now, consider the symmetric unitary matrix $u$ arising from \eqref{result_the_cha}. Lemma \ref{lemma_autonue_takagi}, which we state below, guarantees the existence of a unitary matrix $S$ such that $S^{\T} u S = \mathcal{J}$. By identifying $S = v P$, we establish the existence of the required unitary transformation $v = S P^{\T}$, which is unitary as the product of two unitary matrices (noting that $\overline{v}^{-1} = v^{\T}$).
\end{proof}

\begin{lemma}\label{lemma_autonue_takagi}
	Let $n$ and $m$ be non-negative integers, and let $U$ be a $(2n+m) \times (2n+m)$ complex matrix that is both symmetric  and unitary.
	Define the matrix
	$
	\sigma_{x} = \begin{pmatrix} 0 & 1 \\ 1 & 0 \end{pmatrix}$.
	Then, there exists a $(2n+m) \times (2n+m)$ unitary matrix $S$ such that
	\begin{equation} \label{eq:takagi_decomposition}
		S^{\T}US = \left(\bigoplus_{i=1}^{n} \sigma_{x}\right) \oplus \left(\bigoplus_{j=1}^{m} (1)\right) = \mathcal{J}.
	\end{equation}
\end{lemma}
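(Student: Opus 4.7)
The plan is to reduce the claim to the Autonne--Takagi factorization for symmetric complex matrices: for any complex symmetric $A$ there exists a unitary $V$ with $A = V \Sigma V^{\T}$, where $\Sigma$ is the diagonal matrix of the singular values of $A$. Specialized to a symmetric unitary matrix, all singular values are $1$, so one obtains $U = V V^{\T}$ for some unitary $V$. This "square-root" decomposition of a symmetric unitary is the main ingredient of the proof.

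Next I would produce a unitary $W$ with $\mathcal{J} = W W^{\T}$ by an explicit block construction. For each singlet block this is trivial (take $W=1$), and for each doublet block $\sigma_x$ one checks directly that
\begin{equation*}
W_0 \coloneq \frac{1}{\sqrt{2}}\begin{pmatrix} 1 & i \\ 1 & -i \end{pmatrix}
\end{equation*}
is unitary and satisfies $W_0 W_0^{\T} = \sigma_x$. Taking the direct sum of these blocks in the same order as in the definition of $\mathcal{J}$ yields the desired $W$.

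The final step is to combine these two factorizations. Set $S \coloneq \bar{V}\, W^{\T}$. This is unitary because both the conjugate and the transpose of a unitary matrix are again unitary, and so is their product; equivalently $S^{\T} = W V^{*}$. Using $V^{*} V = I$ together with its complex conjugate $V^{\T}\bar{V} = I$, one computes
\begin{equation*}
S^{\T} U S = W V^{*}\,(V V^{\T})\,\bar{V}\, W^{\T} = W\,(V^{*} V)(V^{\T}\bar{V})\,W^{\T} = W W^{\T} = \mathcal{J}.
\end{equation*}

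The only nontrivial input is the Autonne--Takagi theorem itself, which is classical. The main conceptual hurdle is recognizing that any symmetric unitary admits a decomposition $U = V V^{\T}$ with $V$ unitary; once this is in hand, the remainder of the proof is just the explicit $2\times 2$ construction above combined with a short algebraic check. No obstructions arise from $n$ or $m$ being zero, so the edge cases (purely singlets, purely doublets, or trivial $U$) are handled uniformly.
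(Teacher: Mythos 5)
Your proof is correct and follows essentially the same route as the paper: both invoke the Autonne--Takagi factorization to reduce the symmetric unitary $U$ to the identity (all singular values equal $1$), and both then conjugate by the same explicit block unitary (your $W_0$ is just the transpose of the paper's $2\times 2$ block) to turn each $I_2$ into $\sigma_x$. The only difference is the cosmetic choice of writing the factorization as $U=VV^{\T}$ versus $V^{\T}UV=I$, which amounts to replacing $V$ by its complex conjugate.
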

\begin{proof}
	From Autonne--Takagi factorization \cite{Autonne1915,Takagi1924}, for any symmetric complex matrix $U$, there exists a unitary matrix $V$ such that $V^{\T}UV=\Sigma$ with $\Sigma$ being the diagonal matrix filled by the singular values of $U$. In our case, the matrix $U$ is also unitary, implying all of its singular values are $1$. Therefore, we have 
	\begin{equation}\label{}
		\begin{aligned}[b]
			V^{\T}UV=I_{2n+m}.
		\end{aligned}
	\end{equation}  
	with $I_{2n+m}$ being $(2n+m)\times (2n+m)$ identity matrix.
	Next, we define $W_{2}\coloneq 2^{-1/2}\begin{pmatrix} 1 & 1 \\ i & -i \end{pmatrix}$ and $I_{2}\coloneq \begin{pmatrix} 1 & 0 \\ 0 & 1 \end{pmatrix}$. A direct calculation shows that $W^{\T}I_{2}W=\sigma_{x}$ and the unitarity $W^{\ast}W=I_{2}$. Therefore, we construct $Q = (\bigoplus_{i=1}^{n} W_{2}) \oplus (\bigoplus_{j=1}^{m} (1))$ such that $Q^{\T}I_{2n+m}Q=\mathcal{J}$ and have 
	\begin{equation}\label{}
		\begin{aligned}[b]
			Q^{\T}V^{\T}U  V Q= \left(\bigoplus_{i=1}^{n} \sigma_{x}\right) \oplus \left(\bigoplus_{j=1}^{m} (1)\right).
		\end{aligned}
	\end{equation}
	By denoting $S=VQ$ and checking its unitarity we complete the proof.
\end{proof}
\begin{remark}\label{remark_unique_non}
	The matrix $S$ is not unique here. In fact, the matrix $V'=VO$ with any real orthogonal matrix $O$ will also give the same Autonne--Takagi factorization. On the other hand, for each sector, the $W$ matrix can always be replaced by $W_{2}(\theta) \coloneq 2^{-1/2}\begin{pmatrix} e^{i\theta} &e^{-i\theta} \\ ie^{i\theta} & -ie^{-i\theta} \end{pmatrix}$ for any $\theta\in \mathbb{R}$ such that we still have $W^{\T}(\theta)I_{2}W(\theta)=\sigma_{x}$ and the unitarity $W(\theta)^{\ast}W(\theta)=I_{2}$.
\end{remark}

\label{reply_proof_theorem_epr_implies_sqdb}
\begin{theorem}\label{theorem_epr_implies_sqdb}
A QMS $\mathcal{T}$ generated by $\mathcal{L}$ satisfies SQDB if it admits a special representation that satisfies the TC condition (Definition \ref{def_tc}) and whose corresponding quantum EPR (defined in Definition \ref{def_epr}) vanishes.
\end{theorem}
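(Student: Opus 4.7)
The plan is to exploit the fact that the quantum EPR in \eqref{quantum_epr} is a sum of manifestly non-negative terms of the form $(a-b)\ln(a/b)$, so a vanishing EPR forces each summand to vanish individually. Concretely, for every $k\in\mathsf{k}$ and every pair of eigenvectors $e_i,e_j$ of $\rho$, we must have $\rho_i w_{ji}^k=\rho_j w_{ij}^{k^{\ast}}$. I would then substitute the TC relation $L_{k^{\ast}}=c_k L_k^{\ast}$, which implies $|(e_i,L_{k^{\ast}}e_j)|^2=c_k^2|(e_j,L_k e_i)|^2$, yielding the pointwise identity $\rho_i|(e_j,L_ke_i)|^2=c_k^2\rho_j|(e_j,L_ke_i)|^2$. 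This is a disjunction: either the matrix element vanishes, or $\rho_i=c_k^2\rho_j$. In either case, taking square roots one obtains the uniform identity $\rho_i^{1/2}(e_j,L_ke_i)=c_k\rho_j^{1/2}(e_j,L_ke_i)$, a step that sidesteps any subtlety with vanishing matrix elements.

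Next I would lift this scalar identity to an operator equation. Since $\rho^{1/2}e_i=\rho_i^{1/2}e_i$ and $\rho^{1/2}$ is self-adjoint, the two sides are exactly $(e_j,L_k\rho^{1/2}e_i)$ and $c_k(e_j,\rho^{1/2}L_ke_i)$; completeness of $\{e_i\}$ then gives
\begin{equation}
L_k\rho^{1/2}=c_k\rho^{1/2}L_k\qquad\text{for every }k\in\mathsf{k}.
\end{equation}
Applying this to the index $k^{\ast}$ yields $\rho^{1/2}L_{k^{\ast}}=c_{k^{\ast}}^{-1}L_{k^{\ast}}\rho^{1/2}=c_kL_{k^{\ast}}\rho^{1/2}$, after invoking the consistency $c_kc_{k^{\ast}}=1$. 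Using TC once more to rewrite $L_k^{\ast}=c_{k^{\ast}}L_{k^{\ast}}$, I obtain the clean relation
\begin{equation}
\rho^{1/2}L_k^{\ast}=L_{k^{\ast}}\rho^{1/2}=\sum_{\ell\in\mathsf{k}}\delta_{k^{\ast}\ell}\,L_\ell\rho^{1/2}.
\end{equation}

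The final step is to match this with the characterization in Theorem~\ref{theorem_sqdb_chara} by taking $u_{k\ell}=\delta_{k^{\ast}\ell}$. Symmetry $u_{k\ell}=u_{\ell k}$ is equivalent to $(k^{\ast}=\ell)\Leftrightarrow(\ell^{\ast}=k)$, which follows from $k^{\ast\ast}=k$, and unitarity reduces to $\sum_\ell \delta_{k^{\ast}\ell}\delta_{m^{\ast}\ell}=\delta_{km}$. Thus the hypothesis of Theorem~\ref{theorem_sqdb_chara} is verified on the given special representation, and SQDB follows.

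I do not anticipate a serious obstacle, since each step is essentially algebraic once the pointwise identity of step one is in hand. The only slightly delicate point is the passage $\rho_i|x|^2=c_k^2\rho_j|x|^2 \Rightarrow \rho_i^{1/2}x=c_k\rho_j^{1/2}x$, which must be justified uniformly regardless of whether $x=(e_j,L_ke_i)$ vanishes; beyond this, the argument is a short computation culminating in an application of Theorem~\ref{theorem_sqdb_chara}.
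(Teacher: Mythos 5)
Your proposal is correct and follows essentially the same route as the paper's proof: termwise vanishing of the non-negative EPR summands, combined with the TC relation, yields the operator identity $\rho^{1/2}L_k^{\ast}=L_{k^{\ast}}\rho^{1/2}$, after which the involution matrix $(\delta_{k^{\ast}\ell})$ is recognized as symmetric unitary and Theorem~\ref{theorem_sqdb_chara} is invoked. The only cosmetic difference is your intermediate form $L_k\rho^{1/2}=c_k\rho^{1/2}L_k$ versus the paper's $L_{k^{\ast}}=\rho^{1/2}L_k^{\ast}\rho^{-1/2}$, which are equivalent under TC, and your case analysis for vanishing matrix elements matches the paper's.
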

\begin{proof}
We enter the representation mentioned in the assumption and from the expression of EPR \eqref{quantum_epr} we have
\begin{equation}\label{sigma_0}
	\begin{aligned}[b]
		\sigma=0 \quad\text{and thus}\quad \rho_{i}w_{ji}^{k}=\rho_{j}w_{ij}^{k^{\ast}} 
	\end{aligned}
\end{equation}
for all $i,j\in \{1,2,...,\dim\mathcal{H}\}$.
Because this representation is TC, we have
\begin{equation}\label{tc_app}
	\begin{aligned}[b]
		L_{k^{\ast}}=c_{k}L_{k}^{\ast}\quad\text{hence}\quad (e_{i},L_{k^{\ast}}e_{j})=c_{k}(e_{i},L_{k}^{\ast} e_{j}),\quad \forall k\in \mathsf{k}.
	\end{aligned}
\end{equation}
Equation \eqref{sigma_0} implies $\rho_{i}|(e_{j},L_{k}e_{i})|^{2}=\rho_{j}|(e_{i},L_{k^{\ast}}e_{j})|^{2}$ and further $\rho_{i}|(e_{i},L_{k}^{\ast}e_{j})|^{2}=\rho_{j}|(e_{i},L_{k^{\ast}}e_{j})|^{2}$ for all $i,j\in \{1,2,...,\dim\mathcal{H}\}$. With \eqref{tc_app}, one deduces that 
\begin{equation}\label{ck2_rhoij}
	\begin{aligned}[b]
		\rho_{i}|(e_{i},L_{k}^{\ast}e_{j})|^{2}=c_{k}^{2}\rho_{j}\qty|(e_{i},L_{k}^{\ast}e_{j})|^{2} \quad\text{or equivalently}\quad 	(c_{k}^{2}-\rho_{i}/\rho_{j})|(e_{i},L_{k}^{\ast}e_{j})|^{2}=0
	\end{aligned}
\end{equation}
Since the eigenvalues of $\rho$ are strictly positive, \eqref{ck2_rhoij} implies that either $(e_{i},L_{k}^{\ast}e_{j})=0$ or $c_{k}=\sqrt{\rho_{i}/\rho_{j}}$. In either case, the identity $c_{k}(e_{i},L_{k}^{\ast}e_{j})=\sqrt{\rho_{i}/\rho_{j}}(e_{i},L_{k}^{\ast}e_{j})$ holds. By noticing the TC condition, we obtain the element-wise equality
\begin{equation}\label{elements_wise}
	\begin{aligned}[b]
		(e_{i},L_{k^{\ast}}e_{j})=(e_{i},\rho^{1/2}L_{k}^{\ast}\rho^{-1/2}e_{j})
	\end{aligned}
\end{equation}
for all $i,j$. Consequently, the following operator equality holds true:
\begin{equation}\label{L_k_ast_eq}
	\begin{aligned}[b]
		L_{k^{\ast}}=\rho^{1/2}L_{k}^{\ast}\rho^{-1/2}.
	\end{aligned}
\end{equation}
Note that the existence and boundedness of $\rho^{-1/2}$ follow from the finite dimensionality of $\mathcal{H}$ and faithfulness of $\rho$. Equation~\eqref{L_k_ast_eq} can be reformulated as
\begin{equation}\label{}
	\begin{aligned}[b]
		\rho^{1/2}L_{k}^{\ast}=L_{k^{\ast}}\rho^{1/2}=\sum_{\ell}\delta_{k^{\ast}\ell}L_{\ell}\rho^{1/2}.
	\end{aligned}
\end{equation}
Note that the matrix $(\delta_{k^{\ast}\ell})_{k,\ell\in \mathsf{k}}$, which equals to $P\mathcal{J}P^{\T}$ from the discussion below \eqref{J_matrix}, is both unitary and symmetric. It suffices to find a single special representation for which \eqref{eq1_theorem_sqdb_chara} holds. Since due to the unitary equivalence, Eq.\eqref{eq1_theorem_sqdb_chara} is then valid for all special representations. Consequently, by Theorem \ref{theorem_sqdb_chara}, we conclude that $\mathcal{L}$ satisfies the SQDB condition.
\end{proof}

The main results, Theorems \ref{theorem_sqdb_implies_epr} and \ref{theorem_epr_implies_sqdb}, can be intuitively summarized as follows. 
\begin{equation*}\label{}
	\begin{aligned}[b]
		\exists \text{ TC zero-EPR special representation} \implies &\text{ SQDB } \implies \exists \text{ zero-EPR special representation}
	\end{aligned}
\end{equation*}
We use the following example to emphasize that the SQDB condition is insufficient to imply the existence of a zero-EPR special representation satisfying the TC condition. This fact is quite natural since SQDB does not encode thermodynamical information such as $(c_{k})_{k\in \mathsf{k}}$.
\begin{example}
We consider a two-level system, where the generator of the corresponding QMS can be represented by only one jump operator. Specifically, we set $H=\omega \sigma_{z}=\begin{pmatrix} \omega & 0 \\ 0 & -\omega \end{pmatrix}$ and $L=\sqrt{\nu}\sigma_{+}+\sqrt{1-\nu}\sigma_{-}=\begin{pmatrix} 0 & \sqrt{\nu} \\ \sqrt{1-\nu} & 0 \end{pmatrix}$ with $\omega\in \mathbb{R}_{>0}$ and $\nu\in (0,1)\backslash\{1/2\}$. Here, $\sigma_z$ is the Pauli-$z$ matrix, and $\sigma_{\pm} = (\sigma_x \pm i \sigma_y)/2$ are the raising and lowering operators. One can verify that $(H,L)$ here constitute a special representation and $\rho=\begin{pmatrix} \nu & 0 \\ 0 & 1-\nu \end{pmatrix}$ gives the faithful invariant state, since $-i[H,\rho]+L\rho L^{\ast}-\{L^{\ast}L,\rho\}/2=0$. One can also check that $\rho^{1/2}L^{\ast}=L\rho^{1/2}=\begin{pmatrix} 0 & \sqrt{\nu(1-\nu)} \\ \sqrt{\nu(1-\nu)} & 0 \end{pmatrix}$, which implies the QMS here satisfies SQDB from Theorem \ref{theorem_sqdb_chara} with the unitary matrix $(1)$.  However, in this case the generator will never have a special representation satisfying the TC condition. To see this, we first note that the TC condition imposes that in any special representation, there exists a  nonzero scalar $c$ such that $\widetilde{L}=c\widetilde{L}^{\ast}$. Here $\widetilde{L}=e^{i\theta}L$ with $\theta\in \mathbb{R}$ denotes another special representation. This relation implies $e^{i\theta}\begin{pmatrix} 0 & \sqrt{\nu} \\ \sqrt{1-\nu} & 0 \end{pmatrix}=ce^{-i\theta}\begin{pmatrix} 0 & \sqrt{1-\nu} \\ \sqrt{\nu} & 0 \end{pmatrix}$ and therefore imposing $e^{2i\theta}\sqrt{\nu}=c\sqrt{1-\nu}$, $e^{2i\theta}\sqrt{1-\nu}=c\sqrt{\nu}$. The latter two equations cannot be satisfied by any scalar $c$ unless $\nu = 1/2$.
\end{example}

\section{Structure of Representations Involving EPR and SQDB}\label{sec_structure}
Continuing with the finite-dimensional framework, we now examine the structure of the representation space. Recall from Theorem \ref{theorem_sqdb_implies_epr} that SQDB guarantees the existence of at least one zero-EPR special representation.

Focusing on the jump operators, let $\mathfrak{R}(\mathcal{T})$ be the collection of all families $(L_k)_{k \in \mathsf{k}}$ forming a special representation for the generator of $\mathcal{T}$ (with $H$ fixed). We also define the involution matrix $\Delta = (\delta_{k^{\ast}\ell})_{k,\ell\in \mathsf{k}}$. 
Based on these definitions, the following proposition enables one to find other zero-EPR special representations in $\mathfrak{R}(\mathcal{T})$, starting from the one constructed in Theorem  \ref{theorem_sqdb_implies_epr} when $\mathcal{T}$ satisfies SQDB,
\begin{proposition}\label{prop_zero_EPR_trans}
	Let $\mathsf{k}$ be a finite set equipped with an involution and $(L_{\ell})_{\ell \in \mathsf{k}}$ be a special representation in $ \mathfrak{R}(\mathcal{T})$ such that $\rho^{1/2}L_{k}^{\ast}=L_{k^{\ast}}\rho^{1/2}$ holds for all $k\in \mathsf{k}$. An alternative special representation $(\widetilde{L}_{\ell})_{\ell \in \mathsf{k}}$ of the identical generator leads to vanishing EPR,  if $\widetilde{L}_{k}=\sum_{\ell}v_{kl}L_{\ell}$ for some unitary matrix $v$ satisfying $v^{\T}\Delta v= \Delta$.
\end{proposition}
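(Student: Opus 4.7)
The plan is to reduce the proposition to verifying that the new special representation $(\widetilde{L}_\ell)_{\ell\in\mathsf{k}}$ inherits the balance relation $\rho^{1/2}\widetilde{L}_k^{\ast}=\widetilde{L}_{k^{\ast}}\rho^{1/2}$ from the old one. Once this is established, the very same matrix-element manipulation used inside the proof of Theorem \ref{theorem_sqdb_implies_epr} — namely, sandwiching both sides with eigenvectors $e_i,e_j$ of $\rho$ and taking the square modulus — immediately yields $\rho_i\widetilde{w}_{ji}^k=\rho_j\widetilde{w}_{ij}^{k^{\ast}}$, where $\widetilde{w}_{ij}^k\coloneq|(e_i,\widetilde{L}_k e_j)|^2$. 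Consequently every summand in \eqref{quantum_epr} vanishes, and the EPR associated with $(\widetilde{L}_\ell)$ is zero.

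For the first step, I would substitute $\widetilde{L}_k=\sum_{\ell}v_{k\ell}L_\ell$ into the left-hand side of the desired balance relation and apply the hypothesis on $(L_\ell)$:
\[
\rho^{1/2}\widetilde{L}_k^{\ast}=\sum_{\ell}\overline{v_{k\ell}}\,\rho^{1/2}L_\ell^{\ast}=\sum_{\ell}\overline{v_{k\ell}}\,L_{\ell^{\ast}}\rho^{1/2}=\sum_{m}\overline{v_{km^{\ast}}}\,L_m\rho^{1/2},
\]
where the last equality reindexes $m=\ell^{\ast}$, a valid bijection on $\mathsf{k}$ because $\ast$ is an involution. The desired right-hand side is $\widetilde{L}_{k^{\ast}}\rho^{1/2}=\sum_{m}v_{k^{\ast}m}L_m\rho^{1/2}$, and the linear independence of $(L_m)_{m\in\mathsf{k}}$ (guaranteed by the special-representation property) reduces the problem to the scalar identity $\overline{v_{km^{\ast}}}=v_{k^{\ast}m}$ for all $k,m\in\mathsf{k}$.

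For the second step, I would deduce this identity from the two hypotheses on $v$. First observe that $\Delta$ is symmetric, since $\Delta_{k\ell}=\delta_{k^{\ast}\ell}=\delta_{\ell^{\ast}k}=\Delta_{\ell k}$ by the involution property $(k^{\ast})^{\ast}=k$. Since $v$ is unitary, $v^{-1}=\overline{v}^{\T}$, so the assumption $v^{\T}\Delta v=\Delta$ rearranges into $\Delta v=\overline{v}\,\Delta$. Reading this matrix equality entrywise recovers exactly $v_{k^{\ast}\ell}=\overline{v_{k\ell^{\ast}}}$, closing the argument.

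I do not anticipate any substantive obstacle beyond the indexing bookkeeping; the conceptual content is that the constraint $v^{\T}\Delta v=\Delta$ singles out precisely the unitaries among special representations that intertwine complex conjugation of entries with the involution $k\mapsto k^{\ast}$, which is exactly the symmetry required to transport the balance relation $\rho^{1/2}L_k^{\ast}=L_{k^{\ast}}\rho^{1/2}$ intact from $(L_\ell)$ to $(\widetilde{L}_\ell)$. The only point demanding a bit of care is the correct tracking of the involution through the reindexing $m=\ell^{\ast}$ and the correct placement of transposes versus conjugate transposes when converting $v^{\T}\Delta v=\Delta$ into $\Delta v=\overline{v}\Delta$.
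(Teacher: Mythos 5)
Your proposal is correct and follows essentially the same route as the paper's proof: both reduce the claim to showing that the transformed family inherits the balance relation $\rho^{1/2}\widetilde{L}_{k}^{\ast}=\widetilde{L}_{k^{\ast}}\rho^{1/2}$, and both extract this from the hypothesis via the equivalent form $\Delta v=\overline{v}\,\Delta$ of $v^{\T}\Delta v=\Delta$ (your entrywise identity $v_{k^{\ast}\ell}=\overline{v_{k\ell^{\ast}}}$ is just that matrix equation read componentwise, while the paper sums the old relation against $\overline{v}_{ak}$). The only cosmetic difference is that you verify the identity coefficient-by-coefficient after reindexing, whereas the paper works with the whole family in matrix form; no gap either way.
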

\begin{proof}
	We first note that the relation $\rho^{1/2}L_{k}^{\ast}=L_{k^{\ast}}\rho^{1/2}$ can be reformulated as 
	\begin{equation}\label{prop1_before_trans}
		\begin{aligned}[b]
			\rho^{1/2}L_{k}^{\ast}=\sum_{\ell}\Delta_{k\ell}L_{\ell}\rho^{1/2}.
		\end{aligned}
	\end{equation}
	Since $v$ is unitary, the condition $v^{\T}\Delta v=\Delta$ is equivalent to $\Delta v =\bar{v} \Delta$. Then from \eqref{prop1_before_trans} we have
    	\begin{equation}\label{}
    	\begin{aligned}[b]
    		\rho^{1/2}\sum_{k}\bar{v}_{ak}L_{k}^{\ast}=\sum_{\ell,k}\bar{v}_{ak}\Delta_{k\ell}L_{\ell}\rho^{1/2}=\sum_{\ell}(\Delta v)_{a\ell}L_{\ell}\rho^{1/2}.
    	\end{aligned}
    \end{equation}
This  immediately yields
 	\begin{equation}\label{prop1_after_trans}
	\begin{aligned}[b]
		\rho^{1/2}\widetilde{L}_{a}=\sum_{b}\Delta_{ab}\widetilde{L}_{b}\rho^{1/2},
	\end{aligned}
\end{equation}
   since $\widetilde{L}_{k}=\sum_{\ell}v_{kl}L_{\ell}$. Clearly, the relation \eqref{prop1_after_trans} implies that the EPR vanishes in the representation $(\widetilde{L}_{\ell})_{\ell\in \mathsf{k}}$. Note that this representation is also special due to the unitarity of $v$, as guaranteed by Theorem \ref{theorem_rep}.
\end{proof}
In light of this result, the following proposition demonstrates how to find any other TC special representations.  
\begin{proposition}\label{prop_LDB_pres_zero_epr}
	Let $(L_{\ell})_{\ell \in \mathsf{k}}$ be a special representation in $\mathfrak{R}(\mathcal{T})$ satisfying TC with respect to $(c_{k})_{k\in \mathsf{k}}$. An alternative special representation $(\widetilde{L}_{\ell})_{\ell \in \mathsf{k}}$ of the identical generator satisfies TC with respect to $(\widetilde{c}_{k})_{k\in \mathsf{k}}$, if and only if $\widetilde{L}_{k}=\sum_{\ell}v_{kl}L_{\ell}$ for some unitary matrix $v$ satisfying $Cv^{\T}\widetilde{C}^{-1}\Delta v=\Delta$ with $C=\operatorname{diag}\{c_{k}\}$ and $\widetilde{C}=\operatorname{diag}\{\widetilde{c}_{k}\}$.
\end{proposition}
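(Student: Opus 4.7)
The plan is to mirror the strategy of Proposition \ref{prop_zero_EPR_trans}, but with the TC relation $L_{k^*}=c_k L_k^*$ playing the role that $\rho^{1/2}L_k^*=L_{k^*}\rho^{1/2}$ played there. First I would rewrite both TC conditions in the same matrix form I used before. The original TC condition reads $\sum_{\ell}\Delta_{k\ell}L_{\ell}=c_k L_k^*$, and the target TC condition for the new representation reads $\sum_{\ell}\Delta_{k\ell}\widetilde{L}_{\ell}=\widetilde{c}_k \widetilde{L}_k^{*}$. By Theorem \ref{theorem_rep}, any two special representations (with the same Hamiltonian) of the same generator are connected by a unitary $v$: $\widetilde{L}_k=\sum_{\ell}v_{k\ell}L_{\ell}$, so the task is to translate the TC condition for $(\widetilde{L}_\ell)$ into a purely algebraic condition on $v$.

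Next I would substitute $\widetilde{L}_k=\sum_{\ell}v_{k\ell}L_\ell$ into the target TC relation and use the original TC in the form $L_\ell^{*}=c_\ell^{-1}L_{\ell^*}=c_{\ell^*}L_{\ell^*}$ (the second equality uses the consistency $c_\ell c_{\ell^*}=1$) to reduce everything to a linear combination of the $L_m$'s. A short reindexing shows that the target TC condition is equivalent to
\begin{equation}
\sum_{m}v_{k^{*}m}L_{m}=\widetilde{c}_{k}\sum_{m}\bar{v}_{k m^{*}}c_{m}L_{m}.
\end{equation}
Here linear independence of $(L_m)_{m\in\mathsf{k}}$ (a defining property of a special representation) yields the elementwise identity $v_{k^{*}m}=\widetilde{c}_{k}\,c_{m}\,\bar{v}_{k m^{*}}$, valid for all $k,m$.

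I would then repackage this into matrix language. Because $\Delta_{k\ell}=\delta_{k^{*}\ell}$, one has $(\Delta v)_{km}=v_{k^{*}m}$ and $(\bar{v}\Delta)_{km}=\bar{v}_{k m^{*}}$, so the elementwise identity becomes $\Delta v=\widetilde{C}\,\bar{v}\,\Delta\,C$. To cast this into the desired form $Cv^{\T}\widetilde{C}^{-1}\Delta v=\Delta$, I would multiply by $\widetilde{C}^{-1}$ on the left and use unitarity of $v$ in the form $v^{\T}\bar{v}=\id$ to get $v^{\T}\widetilde{C}^{-1}\Delta v=\Delta C$. The final ingredient is the identity $C\Delta C=\Delta$, which follows from $(C\Delta C)_{km}=c_{k}c_{m}\delta_{k^{*}m}$ combined with $c_{k}c_{k^{*}}=1$. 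Multiplying $v^{\T}\widetilde{C}^{-1}\Delta v=\Delta C$ by $C$ on the left and applying $C\Delta C=\Delta$ gives exactly $Cv^{\T}\widetilde{C}^{-1}\Delta v=\Delta$. For the converse direction I would simply run this sequence of matrix manipulations backwards, again using $v^{\T}\bar{v}=\id$ and $C\Delta C=\Delta$, to recover $\Delta v=\widetilde{C}\bar{v}\Delta C$, hence the elementwise identity, and finally, via linear independence of the $L_\ell$, the TC relation $\widetilde{L}_{k^{*}}=\widetilde{c}_{k}\widetilde{L}_{k}^{*}$.

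There is no genuine obstacle beyond bookkeeping; the only point requiring a little care is the conjugation trick $C\Delta C=\Delta$ (equivalently $C^{-1}\Delta=\Delta C$), which is what makes the asymmetric-looking condition $Cv^{\T}\widetilde{C}^{-1}\Delta v=\Delta$ natural: it is just the symmetric form $\widetilde{C}^{-1/2}\Delta\widetilde{C}^{-1/2}$ being preserved, up to conjugation by the diagonal involution, under $v$. I would flag this identity explicitly in the write-up because it both closes the forward direction and underlies the reversibility needed for the ``only if'' implication.
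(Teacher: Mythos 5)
Your proof is correct and follows essentially the same route as the paper's: substitute $\widetilde{L}_k=\sum_\ell v_{k\ell}L_\ell$ into the target TC relation, eliminate $L_\ell^{*}$ via the original TC relation, and use linear independence plus unitarity of $v$ to reach the matrix condition. The only cosmetic difference is that you arrive at $\Delta v=\widetilde{C}\bar{v}\Delta C$ and close with $C\Delta C=\Delta$, whereas the paper writes the same relation as $\Delta v=\widetilde{C}\bar{v}C^{-1}\Delta$; the two are identified by $C^{-1}\Delta=\Delta C$, i.e., the same consistency constraint $c_kc_{k^{*}}=1$ you invoke.
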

\begin{proof}
By assumption,
	\begin{subequations}\label{}
		\begin{align}
			&	L_{k^{\ast}}=c_{k}L_{k}^{\ast} \quad\text{and thus}\quad \sum_{\ell}\Delta_{k\ell}L_{\ell}=c_{k}L^{\ast}_{k}\label{ldb_j_k_1} \\ 
			& \widetilde{L}_{k^{\ast}}=\widetilde{c}_{k}\widetilde{L}_{k}^{\ast} \quad\text{and thus}\quad \sum_{\ell}\Delta_{k\ell}\widetilde{L}_{\ell}=\widetilde{c}_{k}\widetilde{L}^{\ast}_{k}\label{ldb_j_k_2}.
		\end{align}
	\end{subequations}
	From Theorem \ref{theorem_rep}, these two families of jump operators are linked by a unitary matrix $v$, i.e., $\widetilde{L}_{k}=\sum_{\ell}v_{k\ell}L_{\ell}$. Putting this relation into \eqref{ldb_j_k_2} we have
	\begin{equation}\label{}
		\begin{aligned}[b]
			\sum_{a}(\Delta v)_{ka}L_{a}=\widetilde{c}_{k}\sum_{a}\bar{v}_{ka}L_{a}^{\ast}=\widetilde{c}_{k}\sum_{a}c_{a}^{-1}\bar{v}_{ka}\sum_{\ell}\Delta_{a\ell}L_{\ell}=\sum_{a}\qty(\sum_{b}\widetilde{c}_{k}\bar{v}_{kb}c_{b}^{-1}\Delta_{ba})L_{a}.
		\end{aligned}
	\end{equation}
	Here, for the second equal sign we used \eqref{ldb_j_k_1}. Then by the linear independence of the jump operators, we have $\Delta v= \widetilde{C} \bar{v} C^{-1}\Delta$.
	Noting that $v$ is unitary, we obtain $Cv^{\T}\widetilde{C}^{-1}\Delta v=\Delta$. Conversely, if $\Delta v= \widetilde{C} \bar{v} C^{-1}\Delta$ holds, one obtain \eqref{ldb_j_k_2} by using $\widetilde{L}_{k}=\sum_{\ell}v_{k\ell}L_{\ell}$.
	 Therefore, we finish the proof.
\end{proof}
\begin{corollary}
	Let $(L_{\ell})_{\ell \in \mathsf{k}}$ be a TC (with respect to $(c_{k})_{k\in \mathsf{k}}$) special representation in $ \mathfrak{R}(\mathcal{T})$ with vanishing EPR. An alternative TC (with respect to $(\widetilde{c}_{k})_{k\in \mathsf{k}}$) special representation $(\widetilde{L}_{\ell})_{\ell \in \mathsf{k}}$ of the identical generator leads to vanishing EPR,  if and only if $\widetilde{L}_{k}=\sum_{\ell}v_{kl}L_{\ell}$ for some unitary matrix $v$ satisfying both $v^{\T}\Delta v= \Delta$ and $Cv^{\T}\widetilde{C}^{-1}\Delta v=\Delta$. Here, we denote $C=\operatorname{diag}\{c_{k}\}$ and $\widetilde{C}=\operatorname{diag}\{\widetilde{c}_{k}\}$.
\end{corollary}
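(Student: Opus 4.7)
The plan is to combine Propositions \ref{prop_zero_EPR_trans} and \ref{prop_LDB_pres_zero_epr}. The TC part is handled entirely by the iff already present in Proposition \ref{prop_LDB_pres_zero_epr}: since both $(L_\ell)_{\ell\in\mathsf{k}}$ and $(\widetilde{L}_\ell)_{\ell\in\mathsf{k}}$ are TC special representations of the same generator, Theorem \ref{theorem_rep} supplies a unitary $v$ with $\widetilde{L}_k=\sum_\ell v_{k\ell}L_\ell$, and Proposition \ref{prop_LDB_pres_zero_epr} then gives the equivalence between the TC of $\widetilde L$ with respect to $\widetilde c$ and the condition $Cv^{\T}\widetilde C^{-1}\Delta v=\Delta$. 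So this condition is both necessary and sufficient from the TC hypotheses alone and requires no further argument.

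The nontrivial content is the remaining equivalence: under the two TC hypotheses, zero EPR of $\widetilde L$ is equivalent to $v^{\T}\Delta v=\Delta$. The key reformulation, extracted from the proof of Theorem \ref{theorem_epr_implies_sqdb} (specifically \eqref{sigma_0}--\eqref{L_k_ast_eq}), is that a TC special representation has vanishing EPR if and only if $\rho^{1/2}L_k^{\ast}=L_{k^\ast}\rho^{1/2}$ for every $k\in\mathsf{k}$; the ``if'' direction is immediate from Definition \ref{def_epr}, while the ``only if'' direction is exactly what that proof establishes. For sufficiency, I apply this reformulation to $(L_\ell)$ and then invoke Proposition \ref{prop_zero_EPR_trans} directly, using $v^{\T}\Delta v=\Delta$ to conclude that $\widetilde L$ has zero EPR.

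For necessity, I suppose $\widetilde L$ has vanishing EPR as well. By the same reformulation I obtain $\rho^{1/2}\widetilde L_k^{\ast}=\sum_\ell \Delta_{k\ell}\widetilde L_\ell \rho^{1/2}$. Substituting $\widetilde L_k=\sum_\ell v_{k\ell}L_\ell$ on the right and using $\rho^{1/2}L_\ell^{\ast}=\sum_m \Delta_{\ell m}L_m\rho^{1/2}$ on the left, I am led to the identity $\sum_m (\bar v\Delta)_{km} L_m\rho^{1/2}=\sum_m (\Delta v)_{km} L_m\rho^{1/2}$. Right-multiplying by $\rho^{-1/2}$ (which exists by faithfulness of $\rho$ and finite dimensionality) and invoking the linear independence of $(L_m)$ in a special representation gives $\bar v\Delta=\Delta v$, which by unitarity of $v$ rearranges to $v^{\T}\Delta v=\Delta$.

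The main obstacle, and the only step where the argument goes beyond merely quoting the two propositions, is this necessity direction: Proposition \ref{prop_zero_EPR_trans} is formulated as a sufficient condition only, so one must supplement it with the short linear-independence computation sketched above. Everything else follows either verbatim from the iff in Proposition \ref{prop_LDB_pres_zero_epr} or from the reformulation of zero EPR established inside the proof of Theorem \ref{theorem_epr_implies_sqdb}.
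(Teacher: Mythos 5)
Your proposal is correct and follows essentially the same route as the paper's own proof: both reduce zero EPR for a TC representation to the relation $\rho^{1/2}L_{k}^{\ast}=L_{k^{\ast}}\rho^{1/2}$ via the argument in Theorem \ref{theorem_epr_implies_sqdb}, use Proposition \ref{prop_zero_EPR_trans} for sufficiency and Proposition \ref{prop_LDB_pres_zero_epr} for the TC condition, and supply the same linear-independence computation ($\bar{v}\Delta=\Delta v$, hence $v^{\T}\Delta v=\Delta$) for necessity. Your observation that Proposition \ref{prop_zero_EPR_trans} is only one-directional and must be supplemented for the "only if" part is exactly what the paper does.
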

\begin{proof}
	First, recall from the discussion in Theorem \ref{theorem_epr_implies_sqdb} that for a TC representation $(L_{\ell})_{\ell\in \mathsf{k}}$, the condition of vanishing EPR is equivalent to
	\begin{equation}\label{tc_epr_zero}
		\rho^{1/2}L_{k}^{\ast}=L_{k^{\ast}}\rho^{1/2}=\sum_{\ell}\Delta_{k\ell}L_{\ell}\rho^{1/2}.
	\end{equation}
	
	For the ``if'' part, The condition $v^{\T}\Delta v=\Delta$ ensures vanishing EPR by Proposition \ref{prop_zero_EPR_trans}, while $Cv^{\T}\widetilde{C}^{-1}\Delta v=\Delta$ ensures that the representation $(\widetilde{L}_{\ell})_{\ell\in \mathsf{k}}$ is TC with respect to $(\widetilde{c}_{k})_{k\in \mathsf{k}}$, as established in Proposition \ref{prop_LDB_pres_zero_epr}. The representation remains special due to the unitarity of $v$ and Theorem \ref{theorem_rep}.
	For the ``only if'' part. Assume that the TC representation $(\widetilde{L}_{\ell})_{\ell\in \mathsf{k}}$ yields a vanishing EPR. This implies the relation:
	\begin{equation}\label{cor1_1}
		\rho^{1/2}\widetilde{L}_{k}^{\ast}=\widetilde{L}_{k^{\ast}}\rho^{1/2}=\sum_{\ell}\Delta_{k\ell}\widetilde{L}_{\ell}\rho^{1/2}.
	\end{equation}
	By Theorem \ref{theorem_rep}, there exists a unitary matrix $v$ such that $\widetilde{L}_{k}=\sum_{\ell}v_{k\ell}L_{\ell}$. Substituting this into \eqref{cor1_1} yields $\rho^{1/2}\sum_{\ell}\overline{v}_{k\ell}L_{\ell}^{\ast} = \sum_{\ell, a}\Delta_{k\ell}v_{\ell a}L_{a}\rho^{1/2}$. Using \eqref{tc_epr_zero} to expand the left-hand side, we obtain:
	\begin{equation}
		\sum_{\ell, a}\overline{v}_{k\ell}\Delta_{\ell a}L_{a}\rho^{1/2}=\sum_{\ell, a}\Delta_{k\ell}v_{\ell a}L_{a}\rho^{1/2} \quad\text{hence}\quad \sum_{\ell, a}\overline{v}_{k\ell}\Delta_{\ell a}L_{a}=\sum_{\ell, a}\Delta_{k\ell}v_{\ell a}L_{a},
	\end{equation}
	where we have used the faithfulness of $\rho$. By the linear independence of the jump operators $(L_{\ell})_{\ell\in \mathsf{k}}$ in special representation, the matrix equality $\overline{v}\Delta = \Delta v$ holds is equivalent to $v^{\T}\Delta v= \Delta$ given the unitarity of $v$. The second condition $Cv^{\T}\widetilde{C}^{-1}\Delta v=\Delta$ is guaranteed by Proposition \ref{prop_LDB_pres_zero_epr} since both representations are TC and special.
\end{proof}

\begin{remark}
	It turns out that the two multisets of coefficients $\{c_{k}\}_{k\in \mathsf{k}}$ and $\{\widetilde{c}_{k}\}_{k\in \mathsf{k}}$ must coincide. To see this, note that the involution matrix $\Delta$ is invertible (note that $\Delta= P\mathcal{J}P^{\T}$ from discussion below \eqref{J_matrix}). From the relations $v^{\T}\Delta v= \Delta$ and $Cv^{\T}\widetilde{C}^{-1}\Delta v=\Delta$, we deduce that $Cv^{\T}\widetilde{C}^{-1} = v^{\T}$. This implies $C = v^{\T}\widetilde{C}(v^{\T})^{-1}$ and therefore the diagonal matrices $C$ and $\widetilde{C}$ are unitarily equivalent. Consequently, their spectra (i.e., the multisets of diagonal elements) coincide.
\end{remark}
In summary, the unitarity of $v$ is required to link other special representations. Then one needs to impose $v^{\T}\Delta v=\Delta$ to preserve zero EPR while imposing $Cv^{\T}\widetilde{C}^{-1}\Delta v=\Delta$ to preserve TC (from $\{c_{k}\}$ to $\{\widetilde{c}_{k}\}$). 

While this work primarily focuses on special representations of the QMS generator, they are not always the natural starting point in physical setups. To address this, we show that any given representation can be transformed into a special one. 
Consider a general representation where the jump operators $(L_{k})_{k\in \mathsf{k}}$ are not necessarily linearly independent. We first introduce the shifted operators $\widetilde{L}_{k} = L_{k}-\operatorname{Tr}(L_{k}) \id/D$, ensuring that all $\widetilde{L}_{k}$ are traceless. Note that to preserve the generator, the Hamiltonian must be adjusted according to \eqref{rep_shift}. Next, the summation $\sum_{k\in \mathsf{k}}\widetilde{L}_{k}^{\ast}\widetilde{L}_{k}$  can always be rewritten as a standard form $\sum_{\ell\in \mathsf{k}'}J_{\ell}^{\ast}J_{\ell}$ where the operators $(J_{\ell})_{\ell\in \mathsf{k}'}$ are linearly independent (the number of jump operators may reduce, i.e., $|\mathsf{k}'|\leq |\mathsf{k}|$). Crucially, this transformation preserves the dissipative part (see Definition \ref{def_dissi_cp_part} later) of the generator, namely $-\sum_{k\in \mathsf{k} } (\widetilde{L}_k^* \widetilde{L}_k x - 2 \widetilde{L}_k^* x \widetilde{L}_k + x \widetilde{L}_k^* \widetilde{L}_k)/2 = -\sum_{\ell\in \mathsf{k}'}(J_{\ell}^{\ast}J_{\ell}x -2J_{\ell}^{\ast}xJ_{\ell}+xJ_{\ell}^{\ast}J_{\ell})/2$. Consequently, together with the adjusted Hamiltonian, the family $(J_{\ell})_{\ell\in \mathsf{k}'}$ yields a representation of the dynamical generator identical to that of $(L_{k})_{k\in \mathsf{k}}$. Since each $J_{\ell}$ is a linear combination of the traceless operators $(\widetilde{L}_{k})_{k\in \mathsf{k}}$, it remains traceless, i.e., $\operatorname{Tr}(J_{\ell})=0$. This satisfies condition (i) of a special representation.

We now turn to the construction of a special representation that preserves TC. Starting from a linearly independent TC representation $(L_{k})_{k\in \mathsf{k}}$ with coefficients $(c_{k})_{k\in \mathsf{k}}$, we first shift jump operators $\widetilde{L}_{k} = L_{k}-\operatorname{Tr}(L_{k}) \id/D$ to ensure tracelessness.
It is straightforward to verify that this shift preserves the TC relation:
\begin{equation}
	c_{k}\widetilde{L}_{k}^{\ast} = c_{k}\qty[L^{\ast}_{k} -\overline{\operatorname{Tr}(L_{k})} \id/D] = L_{k^{\ast}} -\operatorname{Tr}(L_{k^{\ast}}) \id/D  = \widetilde{L}_{k^{\ast}}.
\end{equation}
However, whether the family $(\widetilde{L}_{k})_{k\in \mathsf{k}}$ remains linearly independent depends on the algebraic structure of the original operators. Suppose first that $\id \notin \operatorname{span}(\{L_{k}\}_{k\in \mathsf{k}})$. Then by Lemma \ref{lemma_shift_linear}, the shifted family $(\widetilde{L}_{k})_{k\in \mathsf{k}}$ remains linearly independent. Since these operators are traceless, the resulting representation is both special and TC. On the other hand, if $\id \in \operatorname{span}(\{L_{k}\}_{k\in \mathsf{k}})$, i.e., $\id=\sum_{k}b_{k}L_{k}$, it follows that $\sum_{k}b_{k}\operatorname{Tr}(L_{k})/D = \operatorname{Tr}(\id)/D = 1$. Consequently, Lemma \ref{lemma_shift_linear} implies that $\{\widetilde{L}_{k}\}_{k\in \mathsf{k}}$ is linearly dependent. In this scenario, it is impossible to construct a special TC representation using this shift method.
\begin{lemma}\label{lemma_shift_linear}
	Let $(L_k)_{k=1}^n$ be a family of linearly independent operators and $(a_k)_{k=1}^n$ be a family of complex numbers. The family $(J_k \coloneq L_k - a_k \id)_{k=1}^n$ is linearly dependent if and only if the identity operator lies in the span of $\{L_k\}_{k=1}^n$ (i.e., $\id = \sum_{k=1}^n b_k L_k$) and the coefficients satisfy the condition $\sum_{k=1}^n a_k b_k = 1$.
\end{lemma}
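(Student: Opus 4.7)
The plan is to prove both directions directly by manipulating an explicit linear combination, using the linear independence of $(L_k)_{k=1}^n$ as the main algebraic tool. Note first that whenever $\id \in \operatorname{span}(\{L_k\})$, the coefficients $(b_k)$ in $\id = \sum_k b_k L_k$ are uniquely determined by linear independence, so the scalar $\sum_k a_k b_k$ is well-defined and the condition in the statement is unambiguous.

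For the ``only if'' direction, I would start from a nontrivial dependence $\sum_{k=1}^n c_k J_k = 0$ with $(c_k)$ not all zero. Substituting $J_k = L_k - a_k\id$ and regrouping yields
\begin{equation*}
\sum_{k=1}^n c_k L_k = \Bigl(\sum_{k=1}^n c_k a_k\Bigr)\id.
\end{equation*}
Call the scalar on the right $\alpha$. If $\alpha = 0$, then $\sum_k c_k L_k = 0$ with not all $c_k$ vanishing, contradicting the linear independence of $(L_k)$. Hence $\alpha \neq 0$, and dividing gives $\id = \sum_k (c_k/\alpha) L_k$, which both shows $\id \in \operatorname{span}(\{L_k\})$ and, by uniqueness of the coefficients, identifies $b_k = c_k/\alpha$. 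Then $\sum_k a_k b_k = \alpha/\alpha = 1$, as required.

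For the ``if'' direction, I would assume $\id = \sum_k b_k L_k$ with $\sum_k a_k b_k = 1$ and simply compute
\begin{equation*}
\sum_{k=1}^n b_k J_k = \sum_{k=1}^n b_k L_k - \Bigl(\sum_{k=1}^n a_k b_k\Bigr)\id = \id - \id = 0.
\end{equation*}
Since $\id \neq 0$, not all $b_k$ can vanish, so this is a nontrivial dependence, witnessing linear dependence of $(J_k)$.

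There is no substantial obstacle here: the only subtlety worth flagging is ensuring that ``the coefficients satisfy $\sum_k a_k b_k = 1$'' is interpreted with respect to the unique expansion of $\id$, which is why invoking linear independence of $(L_k)$ at the outset is important. Everything else is a one-line algebraic rearrangement.
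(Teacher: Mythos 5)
Your proof is correct and follows essentially the same route as the paper's: the same one-line computation $\sum_k b_k J_k = \id - \id = 0$ for the ``if'' direction, and the same regrouping $\sum_k c_k L_k = \bigl(\sum_k c_k a_k\bigr)\id$ with the observation that the scalar must be nonzero for the ``only if'' direction. Your additional remark that linear independence of $(L_k)$ makes the coefficients $b_k$ unique, and hence the condition $\sum_k a_k b_k = 1$ unambiguous, is a small but worthwhile clarification not spelled out in the paper.
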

\begin{proof}
	For the ``if'' part, it is sufficient to find a family of coefficients $\{c_{k}\}_{k=1}^{n}$, not all zero, such that $\sum_{k=1}^{n}c_{k}J_{k}=0$. In this case, we simply choose $c_{k}=b_{k}$ and verify:
	\begin{equation}
		\sum_{k=1}^{n}b_{k}J_{k} = \sum_{k=1}^{n}b_{k}L_{k} - \left(\sum_{k=1}^{n}b_{k}a_{k}\right)\id = \id - 1 \cdot \id = 0.
	\end{equation}
	Since $\sum_{k=1}^n b_k L_k=\id \neq 0$, the coefficients $b_k$ cannot be all zero, which proves linear dependence.
	For the ``only if'' part, suppose that $\{J_k\}_{k=1}^{n}$ is linearly dependent. Then there exists a family of coefficients $\{c_{k}\}_{k=1}^{n}$, not all zero, such that $\sum_{k}c_{k}J_{k}=0$. This relation is equivalent to  $\sum_{k=1}^{n}c_{k}L_{k} = \left(\sum_{k=1}^{n}c_{k}a_{k}\right)\id$.
	One confirms that the scalar factor on the right-hand side must be non-zero. Otherwise, the linear independence of $\{L_{k}\}_{k=1}^{n}$ would imply $c_{k}=0$ for all $k$, contradicting the assumption. Therefore, we can divide by this scalar to obtain $\id= \sum_{k=1}^{n}b_{k}L_{k}$ with $b_{k} \coloneq c_{k}/\sum_{j=1}^{n}c_{j}a_{j}$. The condition $\sum_{k=1}^{n}a_{k}b_{k}=1$ follows immediately.
\end{proof}
Next we consider the implications of SQDB to parts of the QMS generator. Precisely, we give the following definition,
\begin{definition}\label{def_dissi_cp_part}
Let $(L_\ell)_{\ell \in \mathsf{k}}$ be a representation of a QMS generator $\mathcal{L}$. Then the maps $\mathcal{D}$ and $\Phi$ defined by $\mathcal{D}(x) := - \sum_{\ell \in \mathsf{k}} (L_\ell^* L_\ell x - 2 L_\ell^* x L_\ell + x L_\ell^* L_\ell)/2$ and $\Phi(x) := \sum_{\ell \in \mathsf{k}} L_\ell^* x L_\ell$ for all $x \in \mathcal{B}(\mathcal{H})$ are called the dissipative part and the completely positive part of $\mathcal{L}$, respectively.
\end{definition}
\begin{remark}
The dissipative and completely positive parts are invariant under all special representations, as a consequence of the unitary equivalence established in Theorem \ref{theorem_rep}.
\end{remark}

\begin{proposition}\label{prop_dissi_sqdb}
	If a QMS admits a representation (not necessarily special) such that the dissipative part of the generator is symmetric with respect to the inner product defined in \eqref{inner_quantum}, then this QMS satisfies the SQDB condition.
\end{proposition}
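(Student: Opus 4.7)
The plan is to fix the hypothesized representation, where $\mathcal{L}(x)=i[H,x]+\mathcal{D}(x)$ with $\mathcal{D}'=\mathcal{D}$, and then (i) deduce that the symmetry of $\mathcal{D}$ alone forces $[H,\rho]=0$, and (ii) verify that, once $[H,\rho]=0$, the identity $\mathcal{L}(x)-\mathcal{L}'(x)=2i[H,x]$ holds, exhibiting SQDB with $K=H$.

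For step (i), I would first derive an explicit formula for the dual dissipator $\mathcal{D}'$. Combining the Schr\"odinger--Heisenberg trace duality $\operatorname{Tr}(A\,\mathcal{D}(B))=\operatorname{Tr}(\mathcal{D}^{\ast}(A)\,B)$ with the cyclicity of the trace and the $\ast$-preservation of $\mathcal{D}^{\ast}$ yields
\begin{equation*}
\mathcal{D}'(x)=\rho^{-1/2}\mathcal{D}^{\ast}(\rho^{1/2}x\rho^{1/2})\rho^{-1/2},
\end{equation*}
where $\rho^{-1/2}$ is bounded thanks to finite-dimensionality and faithfulness of $\rho$. Evaluating the hypothesis $\mathcal{D}=\mathcal{D}'$ at $x=\id$, together with the representation-independent identity $\mathcal{D}(\id)=0$, immediately gives $\mathcal{D}^{\ast}(\rho)=0$. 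The invariance of $\rho$ under the full generator, namely $\mathcal{L}^{\ast}(\rho)=-i[H,\rho]+\mathcal{D}^{\ast}(\rho)=0$, then forces $[H,\rho]=0$.

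For step (ii), since $[H,\rho]=0$ implies $[H,\rho^{1/2}]=0$, a short computation using the inner product \eqref{inner_quantum} shows that the Hamiltonian commutator $\mathcal{H}_H(x)=i[H,x]$ is anti-symmetric with respect to $(\cdot,\cdot)_{\rho}$, i.e.\ $\mathcal{H}_H'(x)=-i[H,x]$. Combined with $\mathcal{D}'=\mathcal{D}$, this produces $\mathcal{L}'(x)=-i[H,x]+\mathcal{D}(x)$ and hence $\mathcal{L}(x)-\mathcal{L}'(x)=2i[H,x]$, confirming SQDB with the self-adjoint choice $K=H$.

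The main obstacle I anticipate is step (i): extracting the commutation $[H,\rho]=0$ from the comparatively weak hypothesis of a symmetric dissipator, especially in a non-special representation where neither the $L_{\ell}$'s are traceless nor $\mathcal{D}$ is canonically normalized. The key observation is that evaluating the symmetry identity at $x=\id$ forces the Schr\"odinger-picture dissipator to annihilate $\rho$ irrespective of representation, and only at this point does the stationarity of $\rho$ under the full generator isolate the commutator $[H,\rho]$. Once that is in hand, step (ii) is a routine trace calculation.
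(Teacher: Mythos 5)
Your proposal is correct and follows essentially the same route as the paper: evaluate the symmetry of $\mathcal{D}$ at the identity to get $\mathcal{D}_{\ast}(\rho)=0$, use stationarity of $\rho$ to conclude $[H,\rho]=0$, and then check that $-i[H,\cdot]+\mathcal{D}$ is the dual generator, giving SQDB with $K=H$. The only cosmetic difference is that you route the first step through an explicit formula for $\mathcal{D}'$ in terms of the trace-predual, whereas the paper plugs $y=\id$ directly into the symmetry identity; both yield the same conclusion.
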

\begin{proof}
	Let $\mathcal{L}_{\ast}$ be the generator of the predual semigroup $\mathcal{T}_{\ast}$ and $\mathcal{D}_{\ast}$ be its dissipative part. The relation $\mathcal{L}(x) = i[H, x] + \mathcal{D}(x)$ implies $\mathcal{L}_{\ast}(\rho) = -i[H, \rho] + \mathcal{D}_{\ast}(\rho)$. 
	By the assumption that $\mathcal{D}$ is symmetric with respect to the inner product \eqref{inner_quantum}, we have
	\begin{equation}
		(y, \mathcal{D}(x))_{\rho} = (\mathcal{D}(y), x)_{\rho}, \quad \forall x, y \in \mathcal{B}(\mathcal{H}).
	\end{equation}
	Setting $y = \id$, the left-hand side becomes $ \operatorname{Tr}(\rho^{1/2}\id \rho^{1/2} \mathcal{D}(x)) = \operatorname{Tr}(\rho \mathcal{D}(x)) = \operatorname{Tr}(\mathcal{D}_*(\rho) x)$ and the right-hand side vanishes, since $\mathcal{D}(\id) = 0$ by construction.
	Thus, $\operatorname{Tr}(\mathcal{D}_*(\rho) x) = 0$ for all $x \in \mathcal{B}(\mathcal{H})$, which implies $\mathcal{D}_{\ast}(\rho) = 0$.
	Consequently, the invariance of the state $\rho$ implies
	\begin{equation}\label{com_H_rho}
		0 = \mathcal{L}_{\ast}(\rho) = -i[H,\rho] + \mathcal{D}_{\ast}(\rho) = -i[H,\rho] \quad\text{and thus}\quad [H,\rho] = 0.
	\end{equation}
	We now define the operator $\mathcal{L}' \coloneq -i[H, \cdot] + \mathcal{D}$ and verify that it is indeed the dual of $\mathcal{L}$. The symmetry of the dissipative part $\mathcal{D}$ is guaranteed by the assumption. For the Hamiltonian part, we compute the dual of $\mathcal{H}(\cdot) \coloneq i[H, \cdot]$:
	\begin{equation}
		\begin{aligned}
			(x, i[H, y])_{\rho} &= \operatorname{Tr}(\rho^{1/2} x^{\ast} \rho^{1/2} i[H,y])= i\operatorname{Tr}([\rho^{1/2} x^{\ast} \rho^{1/2} ,H]y)=i\operatorname{Tr}(\rho^{1/2}[ x^{\ast}  ,H]\rho^{1/2}y)\\&=\operatorname{Tr}(\rho^{1/2}(-i[H  ,x])^{\ast}\rho^{1/2}y)=(-i[H,x],y)_{\rho}.
		\end{aligned}
	\end{equation}
	Here, we have used $[H,\rho^{1/2}]=0$ since $[H,\rho]=0$.
	Thus, the dual of $\mathcal{L}$ is $\mathcal{L}' = -i[H, \cdot] + \mathcal{D}$. Finally, we observe that $\mathcal{L}(x) - \mathcal{L}'(x) = 2i[H, x]$, which fits the definition of SQDB with $K=H$. The uniqueness of the dual operator ensures that $\mathcal{L}'$ is the only candidate, see discussion below \eqref{dual_map}.
\end{proof}
\begin{remark}
	This result provides an algebraic criterion to verify SQDB without explicitly invoking the quantum EPR. While this statement may be considered standard folklore among experts, we include a rigorous derivation here for completeness, as an explicit formulation appears to be absent from the standard literature.
\end{remark}
\begin{remark}
	As a corollary, the proof above shows that if a representation yields a symmetric dissipative part with respect to the inner product \eqref{inner_quantum}, then the associated Hamiltonian must commute with the invariant state $\rho$.
\end{remark}
Proposition \ref{prop_dissi_sqdb} elucidates the interplay between the dissipative part of a QMS generator and the SQDB condition. Analogous results also hold for the completely positive part. As established in \cite{Fagnola2010}, a QMS $\mathcal{T}$ satisfies SQDB if and only if the completely positive part of its generator $\mathcal{L}$ is symmetric with respect to the inner product in special representations.

\section*{Acknowledgements} 
X.-H.T. is particularly grateful to Zongping Gong and Koki Shiraishi for their valuable comments and insights which significantly improved the manuscript. X.-H.T. also thanks Naomichi Hatano for his continuous support and for organizing the POS-RST 2025 conference, which facilitated stimulating discussions. We acknowledge Ryuji Takagi, Takashi Mori, Ryusuke Hamazaki, and Ao Yuan for fruitful discussions.
X.-H.T. was supported by the FoPM, WINGS Program, the University of Tokyo. K.Y. was supported by the Special Postdoctoral Researchers Program at RIKEN, JST ERATO Grant No.~JPMJER2302, and JSPS KAKENHI Grant No.~24H00834. T.V.V. was supported by JSPS KAKENHI Grant No.~JP23K13032, No.~JP26K00022, and No.~JP26H02015.  N.O. was supported by JSPS KAKENHI Grant No.~23KJ0732.

\section*{Conflict of Interest} 
The authors declare that they have no conflict of interest.

\section*{Data Availability}
Data sharing is not applicable to this article as no datasets were generated or analyzed during the current study.

\nocite{Bibtexkey}
\bibliographystyle{unsrt}
\bibliography{ref}

\end{document}